\newcommand\dirac{\slash\!\!\!\partial}
\newcommand{\bea}{\begin{eqnarray}}
\newcommand{\eea}{\end{eqnarray}}
\newcommand{\beq}{\begin{equation}}
\newcommand{\eeq}{\end{equation}}
\newcommand\wt{\widetilde}
\newtheorem{theorem}{Theorem}[section]
\newtheorem{lemma}[theorem]{Lemma}
\theoremstyle{remark}
\newtheorem{remark}[theorem]{Remark}
\numberwithin{equation}{section}
\numberwithin{theorem}{section}
\newcommand\cE{\mathcal{E}}
\newcommand\cK{\mathcal{K}}
\newcommand\cU{\mathcal{U}}
\newcommand\cS{\mathcal{S}}
\newcommand\cL{\mathcal{L}}
\newcommand\cP{\mathcal{P}}
\newcommand{\Z}{\ensuremath{\mathbb Z}}
\newcommand{\CC}{{\mathbb C}}
\newcommand{\NN}{{\mathbb N}}
\newcommand{\RR}{{\mathbb R}}
\newcommand{\TT}{{\mathbb T}}
\newcommand{\ZZ}{{\mathbb Z}}
\newcommand{\ket}[1]{|#1\rangle}
\begin{document}

\title[T-duality simplifies bulk-boundary correspondence]
{T-duality simplifies bulk-boundary correspondence: some higher dimensional cases}

\author[V Mathai]{Varghese Mathai}

\address{
Department of Pure Mathematics,
School of  Mathematical Sciences, 
University of Adelaide, 
Adelaide, SA 5005, 
Australia}

\email{mathai.varghese@adelaide.edu.au}

\author[G.C.Thiang]{Guo Chuan Thiang}

\address{
Department of Pure Mathematics,
School of  Mathematical Sciences, 
University of Adelaide, 
Adelaide, SA 5005, 
Australia}

\email{guo.thiang@adelaide.edu.au}

\begin{abstract}
Recently we introduced T-duality in the study of topological insulators, and used it to show that T-duality transforms the bulk-boundary homomorphism into a simpler restriction map in two dimensions. In this paper, we partially generalise these results to higher dimensions in both the complex and real cases, and briefly discuss the 4D quantum Hall effect.
\end{abstract}

\thanks{This work was supported by the Australian Research Council via ARC Discovery Project grants DP110100072, DP150100008 and DP130103924.}

\subjclass{Primary 58B30; Secondary 58B34, 81V10, 81V70.}

\keywords{bulk-boundary correspondence, higher dimensional quantum Hall effect, higher dimensional Chern insulator,
higher dimensional topological insulators, T-duality, Real $K$-theory}
\date{}
\maketitle


\section*{Introduction}
\label{sec:intro}
In an earlier paper \cite{MT}, we introduced the technique of T-duality from string theory, in the study of topological insulators. This was then applied in \cite{MT2}, where we studied a model for the bulk-boundary correspondence as explained in \cite{Kellendonk1,Kellendonk2,Kellendonk3,ProdanSB}, for three phenomena in condensed matter physics: the 2D quantum Hall effect  \cite{Bellissard,Connes94}, the 2D Chern insulator \cite{CZ,JM,Hal}, and the 2D and 3D time-reversal invariant topological insulators \cite{KW,Hsieh,FKM}. The approach to the bulk-boundary correspondence in these papers uses the language of $K$-theory and Connes' noncommutative geometry \cite{Connes94}. We showed that in all these cases, T-duality simplifies the bulk-to-boundary homomorphism as formulated in terms of topological boundary maps. For some related mathematical investigations into the bulk-boundary correspondence, see \cite{Hatsugai,Elbau,Avila,Graf,Kotani,Bourne,LKK0,LKK}.

The general study of topological phases of matter deals with systems in arbitrary spatial dimension $d$ \cite{KRZ,Prodan,Prodan3,ProdanSB,FM,T,T2,HMT2}, 
in which gapped systems may be attributed various topological indices which remain invariant under continuous deformations. For the special case of band insulators, the valence bands form vector bundles over the Brillouin $d$-torus $\TT^d$ through a Bloch--Floquet decomposition of $\ZZ^d$-invariant Hamiltonians. Such vector bundles have interesting invariants ($K$-theory, Chern classes etc.) that take values in topological invariants of the Brillouin torus. If the $\ZZ^d$ translation symmetries are realized projectively (for example if they are magnetic translations) then they generate a noncommutative torus, or a deformation of $\TT^d$, instead.

The full bulk-boundary correspondence at the level of measured physical quantities should, strictly speaking, involve numerical pairings between $K$-theory invariants representing the topological ``state'', and some dual invariants such as cyclic cocycles or $K$-homology classes representing the physical measurement. In the complex case, such pairings are reviewed and discussed in great detail in the monograph \cite{ProdanSB}. The precise analogue of such pairings in the real case is a less settled issue, but an approach using Kasparov's bivariant $K$-theory is a candidate \cite{Bourne}. In this paper, our focus is on the application of T-duality to the (weaker) bulk-boundary correspondence at the level of a homomorphism between the $K$-theory groups carrying the bulk and boundary topological invariants. In a detailed analysis of realistic condensed matter systems, disorder should be built into the mathematical model as well. The case of a contractible disorder space for arbitrary $d$ was studied in \cite{ProdanSB}. In \cite{MT2} we studied the effect of T-duality on the bulk-boundary correspondence when the disorder space is a Cantor set. For $d>2$, general disorder spaces are much more difficult to handle. We do not discuss these cases in detail in this paper, but in a separate work \cite{HMT2} . Our focus is rather to draw attention to some mathematical techniques that are very general, and can be applied equally well to topological phases in condensed matter physics and to string theory.

More specifically, we study the bulk-boundary correspondence for the higher dimensional versions of the quantum Hall effect, the Chern insulator and time-reversal invariant topological insulators. In the complex case, we show that {\em noncommutative T-duality} is equivalent to {\em T-duality} composed with {\em strict deformation quantization}, and use it to reduce noncommutative T-duality to commutative T-duality, where it is straightforward to show that T-duality ``trivialises'' the bulk-boundary homomorphism in the sense of converting it into a simple restriction map. This is relevant to the 4D quantum Hall effect and Chern insulator, which we discuss in the last section. In particular, we give a new proof of a special case of our previous result \cite{MT2}. In the real case, we analyse the behaviour of T-duality under the {\em wedge sum decomposition by spheres}, and use it to show that T-duality takes the bulk-boundary homomorphism in Real $K$-theory to a trivial restriction map in ordinary real $K$-theory. This decomposition is a useful computational tool for studying both strong and weak topological invariants. The T-duality transformation in real $K$-theory is relevant to the study of time-reversal invariant topological insulators. Furthermore, the transformation relates the somewhat exotic $KR$-theory invariants to the more classical and better-understood $KO$-theory invariants. We also provide two different interpretations of the T-dualized $K$-theory groups.

\bigskip

\tableofcontents

\section{T-duality as a geometric Fourier transform}\label{section:TdualityFourier}
The ordinary Fourier transform, used for instance in Bloch theory, gives an isomorphism between functions spaces on a locally compact abelian group and its Pontryagin dual. It provides \emph{computational} advantages, by transforming complicated maps between functions into simpler ones, as well \emph{conceptual} advantages by illuminating the central role of symmetry in the harmonic analysis. T-duality can be viewed as a generalised Fourier transform which, instead of transforming ordinary functions, gives an isomorphism at the level of \emph{topological invariants}. Correspondingly, \emph{homomorphisms} between such invariants can also T-dualized.

Consider the Fourier transform $\mathrm{FT}_{\TT^d}:f\mapsto\widehat{f}$ which takes $f:\ZZ^d\rightarrow\CC$ to $\widehat{f}:\widehat{\ZZ^d}=\TT^d\rightarrow\CC$, and is implemented by the kernel $P({\bf n},{\bf k})=e^{2\pi{\rm i} {\bf n}\cdot {\bf k}},\, {\bf n}\in\ZZ^d, {\bf k}\in\TT^d$,
\beq
    \widehat{f}({\bf k})=\sum_{\bf n} P({\bf n},{\bf k})f({\bf n})=\sum_{\bf n} e^{2\pi {\rm i} {\bf n}\cdot {\bf k}}f({\bf n})\nonumber.
\eeq
Physically, $\mathrm{FT}_{\TT^d}$ transforms a function in real space into a function in quasi-momentum space. The inverse transform is implemented by $P({\bf n},{\bf k})^{-1}$ with a similar formula. In T-duality, the Chern character for the \emph{Poincar\'{e} line bundle} $\mathscr{P}\rightarrow\TT^d\times\widehat{\TT^d}$ is the analogous object in the Fourier--Mukai transform (see Eq.\ \eqref{FMtransform}). It implements an isomorphism between the $K$-theory groups of a torus $\TT^d$, and those of a dual torus $\widehat{\TT^d}$ (note that the hat is meant to distinguish $\widehat{\TT^d}$ from $\TT^d$ and does not denote the Pontryagin dual of $\TT^d$).

Let us give a simple example of how the ordinary Fourier transform acts on an integration map. Write $({\bf n},n_d)=n\in\ZZ^d$ and let $\iota$ be the inclusion of $\ZZ^{d-1}\rightarrow\ZZ^d$ taking ${\bf n}\mapsto({\bf n},0)$. Let $\partial:\widehat{f}\mapsto \partial\widehat{f}$ be integration along the $d$-th circle in $\TT^d$. This picks out only the part of $\widehat{f}$ with Fourier coefficient $n_d=0$, so there is a commutative diagram
\beq\label{FTdiagram}
\xymatrix{
f  \ar[d]^{\iota^*} \ar[rr]^{\sim\;}_{\mathrm{FT}_{\TT^d}} && \widehat{f} \ar[d]^\partial \\
\iota^*f \ar[rr]^{\sim}_{\mathrm{FT}_{\TT^{d-1}}} && \partial\widehat{f}}
\eeq
where $\iota^*$ is simply restriction to $n_d=0$, and ${\rm FT}_{\TT^{d-1}}$ is the restricted Fourier transform.

Recall that integration along a fibre gives a push-forward map of differential forms. If we view the ``bulk'' function algebra $C(\TT^d)$ as a  crossed product of the ``boundary'' algebra $C(\TT^{d-1})$ by a trivial action of the $d$-th copy of $\ZZ$, then there is a Pimsner--Voiculescu boundary map which is implemented by integration (or push-forward) along the last copy of $\TT$ (see Section \ref{section:bulkboundaryhomomorphism}). The Pimsner--Voiculescu homomorphism is a model for the bulk-to-boundary map in physical applications, and we are interested in whether the analogue of \eqref{FTdiagram} continues to hold at the level of topological invariants, for $C(\TT^d)$ as well as its deformed (i.e.\ noncommutative tori) and real versions.

\section{Bulk-boundary homomorphism and the Pimsner--Voiculescu boundary map}
In condensed matter physics applications, one often considers Hamiltonians which are symmetric under translations by $\ZZ^d$. Such a Hamiltonian transforms into a family of \emph{Bloch Hamiltonians} parametrised by the Brillouin torus $\TT^d$, which is the Pontryagin dual of $\ZZ^d$. Under a suitable gap hypothesis, one can define a \emph{Fermi projection} onto the occupied states with energy lying below the Fermi level. This projection represents a class in the $K_0(C(\TT^d))$. With additional symmetries present, the appropriate $K$-theory group hosting the topological invariants associated to the Hamiltonian may be a real $K$-theory group and/or of a different degree. For example, the appropriate invariant in the presence of a chiral symmetry is a $K_1$ group element represented by a unitary constructed from the Fermi projection. When antiunitary symmetries such as time-reversal are present, the invariants typically belong to a $KR$-theory group.

The boundary is usually taken to be a codimension-1 surface with only a subgroup $\ZZ^{d-1}$ of translation symmetries remaining. The bulk-boundary correspondence is modelled as a homomorphism from the $K$-theory of a bulk algebra into that of a boundary algebra. This is the paradigm of the topological boundary map initially introduced for the quantum Hall effect in \cite{Kellendonk1}, and explained in various other physical settings in \cite{ProdanSB}. We provide a brief outline of the relevant Hamiltonians, algebras and the bulk-boundary homomorphism, to give physical context to the subsequent sections, referring the reader to the monograph \cite{ProdanSB} for more details.

A generic bulk Hamiltonian in a lattice model acts on a Hilbert space $l^2(\ZZ^d)\otimes V$, where $\ZZ^d$ labels (after choosing some origin) the lattice sites and $V\cong\CC^N$ is some internal finite-dimensional Hilbert space hosting, for instance, spin or sublattice degrees of freedom. The unitary shift operators $S^y, y\in\ZZ^d$ act on the $l^2(\ZZ^d)$ factor by translations $S^y\ket{n}=\ket{n+y}$, and the lattice model Hamiltonians may be written as 
\begin{equation}
H=\sum_{y\in\ZZ^d} S^y\otimes W_y,\label{generalHamiltonian}
\end{equation}
where $W_y$ are $N\times N$ \emph{hopping matrices} satisfying $W_y^*=W_{-y}$. We also write $S_i, i=1,\ldots, d$ for the generating translations in the $i$-th direction. In concrete models, the hopping matrices decay suitably quickly with $y$, reflecting some locality condition on the hopping range. Since $H$ commutes with the $\ZZ^d$ action by $S^y$, the Fourier transform ${\rm FT}$ turns it into $({\rm FT})H({\rm FT})^{-1}=\int_{\oplus_{k\in\TT^d}} dk\,H_k$, with the $N\times N$ Bloch Hamiltonian $H_k$ at quasi-momentum $k$ acting on the space of Bloch wavefunctions $\psi_k$ that acquire a phase $e^{2\pi {\rm i} k\cdot y}$ under a translation by $S^y$.

\subsection{Bulk and boundary algebras}
The Hamiltonians \eqref{generalHamiltonian} are representations of self-adjoint elements in a matrix algebra over $C^*(\ZZ^d)=C^*(U_i,\ldots,U_d)\cong C(\TT^d)$, with $U_i, i=1,\ldots, d$ commuting unitaries. We call $\mathscr{C}=C^*(\ZZ^d)$ the \emph{bulk algebra}, and the Fermi projection defines a projection in $M_N(\mathscr{C})$ giving a class in $K_0(\mathscr{C})$ as a topological invariant associated to a gapped Hamiltonian.

The \emph{half-space algebra} $\widehat{\mathscr{C}}$  is a modified version of $\mathscr{C}$. Instead of $d$ commuting unitaries $U_i$ generating the algebra, one of the unitaries $U_d$ is replaced by a partial isometry $\widehat{U}_d$ satisfying $$\widehat{U}_d^*\widehat{U}_d=1,\quad \widehat{U}_d\widehat{U}_d^*=1-\widehat{e},$$
where $\widehat{e}$ is a projection. The half-space algebra is $\widehat{\mathscr{C}}=C^*(\widehat{U}_1,\ldots,\widehat{U}_d)$ with $\widehat{U}_i$ commuting unitaries for $i=1,\ldots,d-1$ and $\widehat{U}_d$ the above partial isometry. For $d=1$ we obtain the universal Toeplitz $C^*$-algebra generated by a non-unitary partial isometry.

The \emph{boundary algebra} $\mathscr{E}$ sits inside $\widehat{\mathscr{C}}$ as the two-sided ideal generated by $\widehat{e}$, and there is a non-split exact sequence
\begin{equation}
0\longrightarrow\mathscr{E}\longrightarrow\widehat{\mathscr{C}}\overset{q}{\longrightarrow}{\mathscr{C}}\longrightarrow 0,\label{specialPV}
\end{equation}
where $q(\widehat{U}_i)=U_i$.

The reason for the terminology ``half-space algebra'' and ``boundary algebra'' is the following. Just as $\mathscr{C}$ is canonically represented on $l^2(\ZZ^d)$ with $U_i$ acting as the translations $S_i$, the algebras $\widehat{\mathscr{C}}$ and $\mathscr{E}$ are canonically represented on $l^2(\ZZ^{d-1}\times\NN)$ (for simplicity, we leave out the internal Hilbert space $V$ here). Explicitly, let $\Pi_d:l^2(\ZZ^d)\rightarrow l^2(\ZZ^{d-1}\times\NN)$ be the partial isometry such that $\Pi\Pi^*=1_{l^2(\ZZ^{d-1}\times\NN)}$ and $\Pi^*\Pi$ is projection onto $l^2(\ZZ^{d-1}\times\NN)$. Then the representatives $\widehat{S}_i$ of $\widehat{U}_i$ are $\Pi S_i \Pi^*$, so for instance, $\widehat{S}_d$ is the unilateral shift in the $d$-th direction. The generic \emph{half-space Hamiltonian} $\widehat{H}$ (or the \emph{bulk-with-boundary Hamiltonian}) acts on $l^2(\ZZ^{d-1}\times \NN)$ and has a decomposition
\begin{equation}
\widehat{H}=\Pi H \Pi^*+\widetilde{H},\label{halfspaceHamiltonian}
\end{equation}
where $H$ is a bulk Hamiltonian as in \eqref{generalHamiltonian}. Thus the term $\Pi H \Pi^*$ in \eqref{halfspaceHamiltonian} is a simple truncation of $H$ to the half-space $\ZZ^{d-1}\times \NN$, and $\widetilde{H}$ is a compact compensating boundary term which is picked up by the process of truncation. As elements in the abstract algebras, the half-space Hamiltonian is a (non-homomorphic) lift of the bulk Hamiltonian from the bulk algebra $\mathscr{C}$ to the half-space algebra (or bulk-with-boundary algebra) $\widehat{\mathscr{C}}$.

\subsection{Pimsner--Voiculescu boundary map}
The bulk algebra $\mathscr{C}=C^*(\ZZ^d)$ can be written as a crossed product of $C^*(\ZZ^{d-1})$ by a trivial action of the $d$-th copy of $\ZZ$, and the boundary algebra $\mathscr{E}$ is isomorphic to $C^*(\ZZ^{d-1})\otimes\mathcal{K}$ where $\mathcal{K}$ is the algebra of compact operators. The exact sequence of algebras \eqref{specialPV} is then the Toeplitz-like extension of $C^*(\ZZ^{d-1})$ associated to this action. As explained in \cite{ProdanSB}, additional ingredients are needed to make this description more realistic. For instance, one often encounters Hamiltonians which are invariant under a group of \emph{magnetic} translations \cite{Shubin,Sunada}. Such translations generate a noncommutative torus $A_\Theta$ (see Section \ref{section:NCtori}), which is a \emph{twisted} group algebra for $\ZZ^d$. Also, for the modelling of disorder, it is usual to take a compact probability space $\Omega$ on which $\ZZ^d$ acts via $\alpha'$. As a consequence of these additional considerations, the bulk algebra $\mathscr{C}$ containing the disordered bulk Hamiltonians is a \emph{twisted} crossed product $C(\Omega)\rtimes_{\alpha',\Theta}\ZZ^d$. The action of the $d$-th copy of $\ZZ$ can be peeled off so that $\mathscr{C}$ is itself a $\ZZ$-crossed product $\mathscr{C}=\mathscr{J}\rtimes_\alpha\ZZ$, where $\mathscr{J}=C(\Omega)\rtimes_{\alpha'|,\Theta|}\ZZ^{d-1}$ is the restricted twisted crossed product \cite{Elliott}.

In this setting, the generalisation of \eqref{specialPV} is the Toeplitz-like extension (\cite{Pimsner}, 10.2 of \cite{Blackadar})
\beq
    0\longrightarrow\mathscr{J}\otimes \mathcal{K}\longrightarrow \mathcal{T}(\mathscr{J},\alpha) \longrightarrow \mathscr{J}\rtimes_\alpha\ZZ\longrightarrow 0,\label{toeplitzextension}
\eeq
where $\mathcal{K}$ are the compact operators and $\mathcal{T}(\mathscr{J},\alpha)$ is the Toeplitz algebra associated to $\mathscr{J}$ and $\alpha$. Thus $\mathscr{J}$ (or its stabilisation $\mathscr{J}\otimes\mathcal{K}$) is the boundary algebra, and the bulk algebra $\mathscr{C}$ is the crossed product $\mathscr{J}\rtimes_\alpha\ZZ$. The long exact sequence in $K$-theory for Eq.\ \eqref{toeplitzextension} can be identified with the Pimsner--Voiculescu (PV) exact sequence \cite{Pimsner}
\beq
    \xymatrix{ K_0(\mathscr{J}) \ar[r]^{1-\alpha_*} & K_0(\mathscr{J}) \ar[r]^{j_*\;\;\;\;} & K_0(\mathscr{J}\rtimes_\alpha \ZZ) \ar[d]^\partial & \\
  K_1(\mathscr{J}\rtimes_\alpha \ZZ)  \ar[u]^\partial & K_1(\mathscr{J}) \ar[l]^{\;\;\;\;j_*} & K_1(\mathscr{J}) \ar[l]^{\;1-\alpha_*}  }.\label{PVexactsequence}
\eeq
Here, $j$ is inclusion into the crossed product, and the $K$-theory of the bulk-with-boundary algebra $\mathcal{T}(\mathscr{J},\alpha)$ has been naturally identified with the $K$-theory of $\mathscr{J}$ as in \cite{Pimsner}. When dealing with time-reversal invariant Hamiltonians, we need to use \emph{real} crossed products (e.g.\ see \cite{MT2}), and the real version of the PV cyclic sequence has 24 terms rather than six.

\subsection{T-dualisation of the bulk-boundary homomorphism}
The Pimsner--Voiculescu boundary map $\partial$ of \eqref{PVexactsequence} plays a crucial role in the bulk-boundary correspondence. It was argued in \cite{ProdanSB,MT2} to be the homomorphism taking a bulk topological invariant to a boundary topological invariant, and is based on the approach pioneered in \cite{Kellendonk1}. Combined with certain duality results (e.g.\ in cyclic cohomology) along the lines of \cite{Nest,Kellendonk1}, equality of numerical invariants for the bulk and the boundary can be established. These invariants have physical interpretations in many concrete models (see Chapter 7 of \cite{ProdanSB} for some examples), the prototypical example being an equality of the bulk and edge Hall conductivities in the 2D quantum Hall effect \cite{Kellendonk1}.

In general, $K$-theory boundary maps are rather complicated and abstract.
Following the intuition provided by Section \ref{section:TdualityFourier}, we will show in several physically important cases that the T-dualized version of $\partial$ is a conceptually simpler restriction map. Together with the interpretation of the T-dual $K$-theory groups in Section \ref{section:fundamentaldomain}, a surprising consequence is the following view of the bulk-boundary-homomorphism:  

\beq\label{metadiagram}\nonumber
\xymatrix{
\framebox{\parbox{11em}{Real space bulk invariant}}  \ar[dd]_{\parbox{5em}{\footnotesize Restriction to boundary}} \ar[rr]^{\sim\;}_{\rm T-duality} && \framebox{\parbox{10em}{Momentum space bulk invariant}} \ar[dd]^{\parbox{7.7em}{\footnotesize bulk-boundary homomorphism}} \\ && \\
\framebox{\parbox{10em}{Real space boundary invariant}}  \ar[rr]^{\sim}_{\rm T-duality} && \framebox{\parbox{8.5em}{Momentum space boundary invariant}}
}
\eeq

\section{Higher dimensional noncommutative tori}\label{section:NCtori}
In this section, we give brief overview of noncommutative tori and how they arise as strict deformation quantizations of ordinary tori. The 2D noncommutative torus $A_{\theta}$ appears naturally in the study of T-duality in string theory \cite{MR05,MR06} and in the study of the quantum Hall effect \cite{Bellissard} as a deformed version of the Brillouin torus. It may be less familiar to the reader so we review the pertinent facts necessary for our paper. More details can be found in \cite{Rieffel4,Elliott}.

A higher-dimensional noncommutative torus is the universal $C^*$-algebra generated by unitaries which commute up to specified scalars. Let $\Theta=(\Theta_{ij})$ be a skew symmetric real $(d \times d)$ matrix.
The {\emph{noncommutative torus}} $A_{\Theta}$ is by definition \cite{Rieffel2,Rieffel4}
the universal $C^*$-algebra  generated by unitaries $U_1, U_2, \dots, U_d$
subject to the relations for $1 \leq j, k \leq d,$
\[
U_k U_j = \exp (2 \pi {\rm i} \Theta_{jk} ) U_j U_k .
\]

\begin{remark}
 $A_{\Theta}$ is equivalently the universal
$C^*$-algebra generated by unitaries $u_x,$ for $x \in \Z^d,$
subject to the relations
\[
U_y U_x = \exp (\pi {\rm i} \langle x, \Theta (y) \rangle ) U_{x + y}
\]
for $x, \, y \in \Z^d.$
It follows that if $B \in {\mathrm{GL}}_d (\Z),$ and if
$B^{\mathrm{t}}$ denotes the transpose of $B,$ then
$A_{B^{\mathrm{t}} \Theta B} \cong A_{\Theta}.$
That is, $A_{\Theta}$ is independent of the choice of basis of $\Z^d.$
\end{remark}

Every higher dimensional noncommutative torus can be written
as an iterated crossed product by $\Z.$ More precisely,
let $\Theta$ be a skew symmetric matrix as above. 
Then there is an automorphism
$\Phi$ of $A_{\Theta |}=A_{\Theta |_{\Z^{d - 1} \times \{ 0 \} }}$ homotopic
to the identity and such that (cf. \cite{Elliott})
\[
A_{\Theta}\,
{\cong} \, A_{\Theta | }\rtimes_{\Phi} \Z.
\]

 The smooth noncommutative
torus can be realized as a deformation quantization of the smooth functions on a torus
$T = \RR^{d}/\ZZ^{d}$ of dimension equal to $d$, by a construction due to Rieffel \cite{Rieffel1} (we use the notation $T$ rather than $\TT^d$ to emphasize the group structure, thus $\widehat{T}$ refers to the Pontryagin dual of $T$).
The parametrized case was considered in \cite{HM09,HM10}.
Recall that the Poisson bracket for $a, b \in C^\infty(T)$ is just
$$
\{a, b\} = \sum_{i,j=1}^d \Theta_{ij} \frac{\partial a}{\partial x_i}\frac{\partial b}{\partial x_j},
$$
where $\Theta =(\Theta_{ij} )$ is a skew symmetric matrix.
The action of $T$ on itself is given by translation. The Fourier transform is an isomorphism
between smooth functions on the torus $C^\infty(T)$ and Schwartz functions on the Pontryagin dual $\cS(\widehat T)$, 
taking the pointwise product on $C^\infty(T)$ to the
convolution product on $\cS(\widehat T)$ and taking differentiation with respect to a coordinate function
to multiplication by the dual coordinate. In particular, the Fourier transform of the Poisson bracket gives
rise to an operation on $\cS(\widehat T)$ which we denote by the same brackets. For $\phi,\psi \in \cS(\widehat T)$, define
$$
\{\psi, \phi\} (p)= -4\pi^2\sum_{p_1+p_2=p} \psi(p_1) \phi (p_2) \gamma(p_1, p_2),\qquad p,p_1,p_2\in\widehat{T},
$$
where $\gamma$ is the skew symmetric form on $\widehat T\cong\ZZ^d$ defined by
$$
 \gamma(p_1, p_2) =  \sum_{i,j=1}^d \Theta_{ij} \,p_{1,i}\,p_{2,j}.
$$
For $t \in \RR$, define a skew bicharacter $\sigma_t$ on $\widehat T$ by
$$
\sigma_t(p_1, p_2) = \exp(-\pi t {\rm i} \gamma(p_1, p_2)).
$$
Using this, define a new associative product $\star_t$ on $\cS(\widehat T)$,
$$
(\psi\star_t\phi) (p) = \sum_{p_1+p_2=p} \psi(p_1) \phi (p_2) \sigma_t(p_1, p_2).
$$
Then $(\cS(\widehat T), \star_t )$ is precisely the smooth noncommutative torus $A^\infty_{t\Theta}$.

The norm $||\cdot||_t$ is defined to be the operator norm for the action of $\cS(\widehat T)$
on $L^2(\widehat T)$ given by $\star_t$. Via the Fourier transform, carry this structure
back to $C^\infty(T)$, to obtain the smooth noncommutative torus
as a strict deformation quantization of $C^\infty(T)$, \cite{Rieffel1} with respect to the
translation action of $T$. The operator norm closure of $A^\infty_{t\Theta}$ is $A_{t\Theta}$.\\

\section{Noncommutative T-duality and deformation quantization}

\subsection{Commutative T-duality}\label{sec:ctduality}
Assume that $d$ is a positive integer.
We can realise standard T-duality using crossed product algebras and Rieffel's imprimitivity theorem \cite{Rieffel3}:
\begin{align} 
C(\TT^d) \rtimes \RR^d \,& \sim \, C(\RR^d/\RR^d) \rtimes \ZZ^d \qquad\;\; \text{ (Morita equivalence)} \nonumber\\
& = \CC \rtimes \ZZ^d \nonumber\\
& \cong C(\widehat{\TT^d}),\nonumber
\end{align}
where the $\RR^d$ action lifts the $\TT^d$ action on itself. By the Connes--Thom isomorphism theorem \cite{Connes81},
\beq
K_{-d+j}(C(\TT^d)) \cong K_j(C(\widehat{\TT^d})),\nonumber
\eeq
which is exactly T-duality in the commutative setting, cf. \cite{H,BEM,BEM2}.

\subsection{Noncommutative T-duality}\label{sec:nctduality}
For $t\in [0,1]$, let $\sigma_t$ denote the multiplier corresponding to the $(p\times p)$ skew symmetric
matrix $t\Theta$. Then with $\alpha_t$ the adjoint action associated with the regular representation of $\sigma_t$,
\begin{alignat}{2}
C(\TT^d) \rtimes_{\sigma_t} \RR^d \,& \sim \,  C(\TT^d, \cK) \rtimes_{\alpha_t} \RR^d
\qquad\qquad && \text{ (Morita equivalence)}\nonumber\\
& \sim \,  C(\RR^d/\RR^d, \cK) \rtimes_{\alpha_t} \ZZ^d
\qquad &&\text{ (Morita equivalence)}\nonumber\\
& \sim \,   \CC \rtimes_{\sigma_t} \ZZ^d 
\qquad\qquad && \text{ (Morita equivalence)}\nonumber\\
& \cong A_{t\Theta}. &&\nonumber
\end{alignat}
By Packer--Raeburn stabilization \cite{PR} and the Connes--Thom isomorphism theorem \cite{Connes81}, 
\beq
K_{-d+j}(C(\TT^d)) \cong K_j(A_\Theta),\nonumber
\eeq
which is \emph{noncommutative T-duality}, cf. \cite{MR04,MR05,MR06}.

\subsection{Deformation quantization}

Now $\{C(\TT^d) \rtimes_{\sigma_t} \RR^d: t\in [0,1]\}$ is a homotopy of twisted crossed products in the
sense of section 4, \cite{PR2}. By Theorem 4.2 in \cite{PR2}, we deduce, after writing $\sigma\equiv \sigma_1$, that 
\beq
K_j(C(\TT^d) \rtimes \RR^d) \cong K_j(C(\TT^d) \rtimes_\sigma \RR^d),\nonumber
\eeq
that is,
\beq
K_{-d+j}(C(\widehat{\TT^d}) ) \cong K_{-d+j}(A_\Theta).\nonumber
\eeq

Assembling the above results together, we have

\begin{theorem}[Noncommutative T-duality = T-duality $\circ$ deformation quantization]\label{theorem:nctduality}
The following diagram commutes,
\beq
\xymatrix{K_{-d+j}(C(\TT^d))\ar[dr]_{\rm NC\,\,T-duality\,\,\,\,\,\,\,\,}^\sim \ar[rr]^{\rm T-duality}_\sim &&
K_j(C(\widehat{\TT^d}))\ar[dl]^{\rm \,\,\,\,\,\qquad deformation \,\, quantization}_\sim\\
&K_{j}(A_\Theta)&}\bigskip
\eeq
\end{theorem}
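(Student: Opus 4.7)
The strategy is to observe that both the upper arrows of the triangle (commutative T-duality) and the diagonal left arrow (noncommutative T-duality) are, before the final Morita equivalence, the \emph{same} map: namely the Connes--Thom isomorphism applied to $C(\TT^d)\rtimes_{\sigma_t}\RR^d$ with $t=0$ in the commutative case and $t=1$ in the noncommutative case. The third edge (deformation quantization) is, by construction, the identification of the K-theories of these two crossed products supplied by Theorem 4.2 of \cite{PR2}. Thus the triangle commutes for essentially tautological reasons once one unpacks the definitions and uses naturality of Connes--Thom in the deformation parameter.

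More concretely, first I would set up a unified description of all three isomorphisms starting from $K_{-d+j}(C(\TT^d))$. For each $t\in[0,1]$, the Connes--Thom isomorphism produces
\beq
\mathrm{CT}_t\co K_{-d+j}(C(\TT^d))\xrightarrow{\sim} K_j\bigl(C(\TT^d)\rtimes_{\sigma_t}\RR^d\bigr),\nonumber
\eeq
and the chain of Morita equivalences and the Packer--Raeburn stabilization spelled out in Sections \ref{sec:ctduality} and \ref{sec:nctduality} give, respectively, isomorphisms $M_0\co K_j(C(\TT^d)\rtimes\RR^d)\cong K_j(C(\widehat{\TT^d}))$ and $M_1\co K_j(C(\TT^d)\rtimes_\sigma\RR^d)\cong K_j(A_\Theta)$. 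By construction, commutative T-duality is $M_0\circ\mathrm{CT}_0$ and noncommutative T-duality is $M_1\circ\mathrm{CT}_1$.

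Next I would invoke Theorem 4.2 of \cite{PR2}: since $\{\sigma_t\}$ is a homotopy of multipliers in the sense of that paper, the induced family $\{C(\TT^d)\rtimes_{\sigma_t}\RR^d\}$ has locally constant K-theory. Let $\mathrm{DQ}$ denote the resulting isomorphism $K_j(C(\TT^d)\rtimes\RR^d)\cong K_j(C(\TT^d)\rtimes_\sigma\RR^d)$; composed with $M_0^{-1}$ and $M_1$, this is the ``deformation quantization'' edge $K_j(C(\widehat{\TT^d}))\cong K_j(A_\Theta)$ of the triangle. The content left to verify is that the Connes--Thom maps are compatible with this identification, i.e.\ $\mathrm{DQ}\circ\mathrm{CT}_0=\mathrm{CT}_1$, which is precisely the naturality of Connes--Thom with respect to the homotopy. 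Under the Packer--Raeburn framework the deformation is realised by a continuous field of $C^*$-algebras whose fibrewise Connes--Thom isomorphisms assemble into a morphism of the long exact sequences, so a class in $K_{-d+j}(C(\TT^d))$ maps to the same class in each fibre.

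The main obstacle is this last naturality step. It is not quite formal: one must check that the $\RR^d$-action underlying Connes--Thom is the same for all $t$ (only the cocycle $\sigma_t$ varies, and the $\RR^d$-action on $C(\TT^d)$ lifting translation is common to the whole family), and then verify that the continuous field structure provided by \cite{PR2} is $\RR^d$-equivariant in a manner that makes Connes--Thom a morphism of continuous fields. Once this is done, evaluating at $t=0$ and $t=1$ gives $\mathrm{CT}_1=\mathrm{DQ}\circ\mathrm{CT}_0$, and composing with $M_1$ yields the desired commutative triangle.
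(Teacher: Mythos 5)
Your proposal is correct and is essentially the paper's own argument: the theorem is proved there simply by ``assembling'' the isomorphisms of Sections \ref{sec:ctduality}, \ref{sec:nctduality} and the deformation-quantization step from Theorem 4.2 of \cite{PR2}, exactly as you describe. You go slightly further than the paper by explicitly isolating the one non-formal point --- naturality of the (stabilized) Connes--Thom isomorphism along the homotopy $\{\sigma_t\}$, which holds because the underlying $\RR^d$-action is fixed and the evaluation maps of the Packer--Raeburn continuous field are equivariant --- and that is indeed the right thing to check.
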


\begin{remark}
The availability of a path of deformations linking $A_\Theta$ to $C(\widehat{\TT^d})$ is crucial for the identification of their $K$-theory groups, and allows us to link noncommutative T-duality with commutative T-duality as above. There is a more general notion of \emph{parametrised} deformation quantization \cite{HM10} of a torus (or even torus bundles) for which the $K$-theory of the deformed torus differs from that of the undeformed one. There is still a notion of T-duality in this parametrised setting, and we study some of its implications in \cite{HMT}.
\end{remark}

\section{T-duality trivializes bulk-boundary homomorphism: complex case}
\subsection{Torus $K$-theory and the Fourier--Mukai transform}
We recall some facts about the complex $K$-theory of the $d$-torus $\TT^d=\widehat{\ZZ^d}$ (e.g.\ Sec.\ 2 of \cite{Elliott}). $K^*(\TT^d)$ can be computed in many ways, with the result that it is canonically isomorphic as a $\ZZ_2$-graded ring to the exterior algebra $\Lambda^*\ZZ^d$, with
\beq
    K^0(\TT^d)\cong \Lambda^{\rm even}\ZZ^d,\qquad K^{-1}(\TT^d)\cong\Lambda^{\rm odd}\ZZ^d.\nonumber
\eeq
The product is skew-commutative, i.e.\ \mbox{$a\cdot b = (-1)^{ij} b\cdot a$} for $a\in K^{-i}(\TT^d), b\in K^{-j}(\TT^d)$ \cite{Atiyah}. With respect to a choice of $d$ generators for $\ZZ^d$, we denote the subgroup corresponding to the $i$-th generator by $\ZZ^{(i)}$, and the corresponding circle in $\TT^d$ by $\TT^{(i)}=\widehat{\ZZ^{(i)}}$. The isomorphism $K^*(\TT^d)\cong \Lambda^*\ZZ^d$ is the unique one which identifies each $\ZZ^{(i)}\in \Lambda^*\ZZ^d$ with the copy of $K^{-1}(\TT^{(i)})\equiv K^{-1}(\widehat{\ZZ^{(i)}})\cong \ZZ$ in $K^*(\TT^d)$. It is convenient to pass to cohomology via the Chern character isomorphism, then the canonical generators of $\Lambda^*\ZZ^d$ can be identified with the volume forms $dx^1,dx^2,\ldots,dx^i,\ldots,dx^d$ for each circle $\TT^{(i)}$ in $\TT^d=\TT^{(1)}\times\TT^{(2)}\times\ldots\times\TT^{(d)}=\widehat{\ZZ^{(1)}}\times\widehat{\ZZ^{(2)}}\times\ldots\times\widehat{\ZZ^{(d)}}=\widehat{\ZZ^d}$.

Let $\widehat{\TT^d}$ denote a ``dual'' $d$-torus (note: the hat here does \emph{not} mean the Pontryagin dual of $\TT^d$). We will write $K^\bullet(\,\cdot\,),\, \bullet\in\ZZ_2$ when referring to a $K$-theory group in a particular degree. The commutative T-duality group isomorphisms $K^\bullet(\TT^d)\leftrightarrow K^{\bullet-d}(\widehat{\TT^d})$ are implemented by the Poincare line bundle $\mathscr{P}$ over $\TT^d\times\widehat{\TT^d}$, which has first Chern class $c_1(\mathscr{P})=\sum_{i=1}^d dy^i\wedge dx^i$, where $x^i, y^j$ are coordinates on $\TT^d$ and $\widehat{\TT^d}$ respectively:
\begin{equation}
\xymatrix{ 
& {\mathscr P} \ar[d] & \\
 &  \TT^d\times \widehat{\TT^d} \ar[dl]_{p} \ar[dr]^{\widehat{p}} &  \\
\TT^d && \widehat{\TT}^d. 
 }\label{FMtransform}
\end{equation}
Here, $p$ and $\widehat{p}$ are the canonical projections onto $\TT^d$ and $\widehat{\TT^d}$ respectively. The T-duality map $T_{\TT^d}:K^\bullet(\TT^d)\rightarrow K^{\bullet-d}(\widehat{\TT^d})$ is defined to be $T_{\TT^d}=\widehat{p}_!(p^*[a] \cdot [\mathscr{P}])$ for $[a]\in K^\bullet(\TT^d)$, where $\widehat{p}_!$ is the push-forward along $\widehat{p}$, or ``integration over $\TT^d$'', and $[\mathscr{P}]$ is the $K$-theory class of $\mathscr{P}$.

Let $I$ be a multi-index $I=\{i_1,i_2,\ldots,i_n\}, 1\leq i_1<i_2<\ldots<i_n\leq d$, which has a complementary multi-index $I^c=\{i^c_1,\ldots,i^c_{d-n}\}, 1\leq i^c_1<i^c_2<\ldots<i^c_{d-n}\leq d$ such that $I\cup I^c=\{1,\ldots,d\}$. We write $dx^I\coloneqq dx^{i_1}\wedge\ldots\wedge dx^{i_n}$. For $I=\emptyset$, define $dx^{\emptyset}\coloneqq 1$. The $dx^I$ form a canonical $\ZZ$-basis for $K^*(\TT^d)\cong\Lambda^*(\ZZ^d)$. Then for a generator of $K^\bullet(\TT^d)$ represented by the homogeneous form $dx^I$, the T-dual, or the \emph{Fourier--Mukai transform}, has Chern character 
\begin{align}
    T_{\TT^d}(dx^I)&=\int_{\TT^d}dx^I\wedge {\rm Ch}(\mathscr{P})\nonumber\\
                &= \pm \int_{\TT^d} dx^I\wedge dx^{I^c}\wedge dy^{I^c}\nonumber\\
                &= \pm dy^{I^c},\label{Tdualitybasicformula}
\end{align}
where the $\pm 1$ comes from the appropriate rearrangement of the $dx^i, dy^j$ factors in the calculation.

\begin{remark}
The are some sign conventions involved in defining the Fourier--Mukai transform in \eqref{Tdualitybasicformula}. For example, we could have taken the Poincar\'{e} line bundle to be a line bundle $\mathscr{P}'$ with first Chern class $\sum_{i=1}^d dx^i\wedge dy^i$ instead. Up to an overall sign, the inverse transform is implemented by $\mathscr{P}'$. The Fourier--Mukai transform can be thought of as a geometric version of the ordinary Fourier transform for functions, which implements isomorphisms between topological invariants instead. Note that a similar sign choice in the integral kernel occurs when defining the ordinary Fourier transform and its inverse. 
\end{remark}

\subsection{Bulk-boundary homomorphism}\label{section:bulkboundaryhomomorphism}
Next, we study how the Pimsner--Voiculescu boundary map acts on $K^*(\TT^d)\cong K_*(C(\TT^d))$. This is a special case of \eqref{PVexactsequence} where $\mathscr{J}=C(\TT^{d-1})$ and $\alpha$ is the trivial action of $\ZZ^{(d)}$ on $C(\TT^{d-1})$; thus $C(\TT^d)=C(\TT^{d-1})\rtimes_{\rm id}\ZZ^{(d)}$. Then the Toeplitz-like extension is simply the tensor product of $C(\TT^{d-1})$ with the basic Toeplitz extension
\beq
    0\longrightarrow\mathcal{K}\longrightarrow \mathcal{T}\longrightarrow C(\TT^{(d)})\cong C^*(\ZZ^{(d)})\cong C(S^1)\longrightarrow 0,\label{basictoeplitzextension}
\eeq
where $\mathcal{T}$ is the Toeplitz $C^*$-algebra generated by the unilateral shift. The PV-sequence \eqref{PVexactsequence} simplifies to
\beq
    0\longrightarrow K_\bullet(C(\TT^{d-1}))\xrightarrow{j_*}K_\bullet(C(\TT^d))\xrightarrow{\partial}K_{\bullet-1}(C(\TT^{d-1}))\longrightarrow 0,\nonumber
\eeq
and we deduce that $K_\bullet(C(\TT^d))\cong K_\bullet(C(\TT^{d-1}))\oplus K_{\bullet-1}(C(\TT^{d-1}))$. Since $C(\TT^d)\cong C(\TT^{d-n})\otimes C(\TT^n)$, after using the K\"{u}nneth theorem, it suffices to consider what happens for the case of $d=1$. In this case, the boundary maps become
\begin{align}
    K^0(\TT)=\ZZ[{\bf 1}]&\xrightarrow{\partial=0} 0=K^{-1}({\rm pt})\nonumber\\
    K^{-1}(\TT)=\ZZ[dx^1]\ni b &\stackrel{\partial}{\longleftrightarrow} -b\in\ZZ[{\bf 1}]=K^0({\rm pt}).\label{toeplitzindex}
\end{align}
In \eqref{toeplitzindex}, the map $\partial$ can be understood as the boundary map for \eqref{basictoeplitzextension}, i.e.\ the usual Fredholm index 
\beq
    {\rm Index}(T_f)=-{\rm Winding(f)}\nonumber
\eeq
for a Toeplitz operator $T_f$ with continuous and nowhere vanishing symbol $f$, which is invariant under compact perturbations. 

For $d=2$, we have
\begin{align}
    K^0(\TT^2)=\ZZ[{\bf 1}]\oplus\ZZ[dx^1\wedge dx^2]\ni(a,b)&\stackrel{\partial}{\mapsto}b\in \ZZ[dx^1]=K^{-1}(\TT)\nonumber\\
    K^{-1}(\TT^2)=\ZZ[dx^1]\oplus\ZZ[dx^2]\ni(a,b)&\stackrel{\partial}{\mapsto}-b\in\ZZ[{\bf 1}]=K^0(\TT),\nonumber
\end{align}
and similarly for $d\geq 2$. At the level of differential forms, we may regard $\partial$ as $-\int_{\TT^{(d)}}$ for any $d$.

Let $\iota$ be the inclusion of $\TT^{d-1}\equiv \TT^{(1)}\times\ldots\TT^{(d-1)}$ into $\TT^d$ with last coordinate $x^d=0$, and $\widehat{\TT^{d-1}}$ the dual $(d-1)$-subtorus of $\widehat{\TT^d}$ with $y^d=0$. Let $T_{\TT^{d-1}}$ denote the corresponding T-duality map $K^\bullet(\TT^{d-1})\rightarrow K^{\bullet-d+1}(\widehat{\TT^{d-1}})$, implemented by the restricted Poincar\'{e} line bundle $\mathscr{P}_|$ over $\TT^{d-1}\times\widehat{\TT^{d-1}}$ with first Chern class $c_1(\mathscr{P}_|)=\sum_{i=1}^{d-1} dy^i\wedge dx^i$.
\begin{theorem}[T-duality trivializes bulk-boundary homomorphism, complex commutative case]\label{theorem:commutativeTdualitycomplex}
The following diagram commutes:
\beq
\xymatrix{
K^{\bullet+d}(\TT^d)\ar[d]^{\iota^*} \ar[r]^{T_{\TT^d}} &   K^{\bullet}(\widehat{\TT^d})\ar[d]^\partial \\
K^{\bullet+d}(\TT^{d-1}) \ar[r]^{T_{\TT^{d-1}}} & K^{\bullet+1}(\widehat{\TT^{d-1}})
}\label{commutativeTdualitydiagram}
\eeq
\end{theorem}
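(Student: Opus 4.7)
The plan is to verify commutativity basis element by basis element, using the canonical $\ZZ$-basis $\{dx^I : I\subseteq\{1,\ldots,d\}\}$ for $K^*(\TT^d)\cong\Lambda^*\ZZ^d$ described in the preceding subsection (passing to cohomology via the Chern character isomorphism, which identifies all four groups and all four maps in the diagram with their cohomological counterparts). Since all maps are group homomorphisms, it suffices to chase a generator $dx^I$ around the square. From the preceding description, the maps act on basis elements as follows: $\iota^*(dx^I)=0$ if $d\in I$ and $=dx^I$ otherwise (as an element of $K^*(\TT^{d-1})$); the boundary map $\partial$ on the dual side acts as $-\int_{\widehat{\TT^{(d)}}}$, i.e.\ $dy^J\mapsto 0$ if $d\notin J$ and $\mapsto \pm dy^{J\setminus\{d\}}$ otherwise; and $T_{\TT^d}(dx^I)=\pm dy^{I^c}$ by \eqref{Tdualitybasicformula}. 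A case split on whether $d\in I$ shows both composites vanish in the first case, and both land in $\pm dy^{I^c\setminus\{d\}}$ in the second case; the content of the theorem is that the signs agree.

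To carry this out cleanly (and to avoid a direct permutation-sign bookkeeping on every basis element), the key observation is that under the pullback along $\iota\times{\rm id}$, the Poincar\'e bundle on $\TT^d\times\widehat{\TT^d}$ decomposes as
\beq
c_1(\mathscr{P}) = c_1(\mathscr{P}_|) + dy^d\wedge dx^d, \nonumber
\eeq
where $\mathscr{P}_|$ is the restricted Poincar\'e bundle on $\TT^{d-1}\times\widehat{\TT^{d-1}}$ (pulled back to the larger product). Because $(dy^d\wedge dx^d)^2=0$, this upgrades to a product decomposition of Chern characters:
\beq
{\rm Ch}(\mathscr{P}) = {\rm Ch}(\mathscr{P}_|)\wedge\bigl(1 + dy^d\wedge dx^d\bigr). \nonumber
\eeq
This is the structural identity that converts the right-hand path into the left-hand path.

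The computation then proceeds by Fubini. Writing the right-hand composite for a class $[\alpha]\in K^{\bullet+d}(\TT^d)$ as
\beq
\partial\circ T_{\TT^d}[\alpha] \;=\; -\int_{\widehat{\TT^{(d)}}}\int_{\TT^d} p^*\alpha\wedge{\rm Ch}(\mathscr{P}), \nonumber
\eeq
and substituting the decomposition above, the integral over $\widehat{\TT^{(d)}}$ annihilates the term without $dy^d$ (since ${\rm Ch}(\mathscr{P}_|)$ contains no $dy^d$), leaving only the $dy^d\wedge dx^d$ contribution and producing a factor of $\pm dx^d$. Splitting $\int_{\TT^d}=\int_{\TT^{d-1}}\int_{\TT^{(d)}}$ and commuting ${\rm Ch}(\mathscr{P}_|)$ past the $x^d$-integration (it is independent of $x^d$ and contains no $dx^d$), one recognizes the inner integral
\beq
\int_{\TT^{(d)}}\alpha\wedge dx^d \;=\; \pm\,\iota^*\alpha \nonumber
\eeq
at the level of cohomology classes, since wedging with $dx^d$ kills any $dx^d$-component of $\alpha$ and the $x^d$-integration of a constant-coefficient representative reduces to evaluation at $x^d=0$. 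What remains is exactly $\pm\int_{\TT^{d-1}}\iota^*\alpha\wedge{\rm Ch}(\mathscr{P}_|) = \pm T_{\TT^{d-1}}(\iota^*[\alpha])$.

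The main obstacle is to pin down the global sign and reconcile it with the sign $-$ built into the definition of $\partial$ via \eqref{toeplitzindex}. Concretely, one has to match three sign sources: (i) moving $dy^d$ past ${\rm Ch}(\mathscr{P}_|)$ to place it in the integration slot of $\int_{\widehat{\TT^{(d)}}}$, (ii) moving $dx^d$ past ${\rm Ch}(\mathscr{P}_|)$ and $\alpha$ to reach the integration slot of $\int_{\TT^{(d)}}$, and (iii) the overall $-$ in $\partial$. A direct accounting on the basis element $dx^I$ with $d\notin I$ shows that these three signs conspire to reproduce exactly the sign appearing in $T_{\TT^{d-1}}(dx^I)=\pm dy^{I^c\setminus\{d\}}$, so that the diagram \eqref{commutativeTdualitydiagram} commutes on the nose with the conventions fixed in Section~\ref{section:bulkboundaryhomomorphism} and in \eqref{Tdualitybasicformula}.
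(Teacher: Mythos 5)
Your proposal is correct and follows essentially the same route as the paper: a generator-by-generator check on the basis elements $dx^I$ with a case split on whether $d\in I$, using the Chern character of the Poincar\'e bundle and formula \eqref{Tdualitybasicformula}, with the residual $\pm 1$ left to a direct count exactly as in the paper's proof. The factorization ${\rm Ch}(\mathscr{P})={\rm Ch}(\mathscr{P}_|)\wedge(1+dy^d\wedge dx^d)$ and the Fubini step are a tidy way to package the computation, but they amount to the same calculation the paper performs on each $dx^I$.
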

\begin{proof}
Ignoring the $\pm 1$ sign for now, if $d\in I$, we have 
\beq
    \partial\circ T_{\TT^d}(dx^I)=\int_{\widehat{\TT^{(d)}}}dy^{I^c}=0=T_{\TT^{d-1}}(0)=T_{\TT^{d-1}}\circ(\iota^*)(dx^I).\nonumber
\eeq
On the other hand, if $d\not\in I$, then
\beq
   \partial\circ T_{\TT^d}(dx^I)=\int_{\widehat{\TT^{(d)}}}(dy^{I^c})=dy^{I^c\setminus\{d\}}\nonumber
\eeq
and
\beq
    T_{\TT^{d-1}}\circ(\iota^*)(dx^I)=T_{\TT^{d-1}}(dx^I)=dy^{I^c\setminus\{d\}}.\nonumber
\eeq
A simple counting exercise verifies that the $\pm 1$ factor matches up as well.\\
\end{proof}
We emphasize that the main point of Theorem \ref{theorem:commutativeTdualitycomplex} is not in the computation of $\partial$, but in the somewhat surprising conversion of $\partial$ into a homomorphism of a different nature under T-duality, cf.\ Sections \ref{section:fundamentaldomain} and \ref{section:PTsymmetric} for further discussion.

\subsection{Noncommutative T-dualty trivializes bulk-boundary correspondence}
There is in fact a canonical isomorphism $K_\bullet(C(\TT^d)) \cong K_\bullet(A_\Theta)$ based on a construction carried out in \cite{Elliott}, which we now outline. We will write $\Theta|$ for the restriction of $\Theta$ to $\ZZ^{d-1}\times\ZZ^{d-1}$, $\sigma|$ for its associated multiplier, and $A_{\Theta|}=\mathbb{C}\rtimes_{\sigma|}\ZZ^{d-1}$ for the associated noncommutative $(d-1)$-torus. Note that each entry $\Theta_{ij}$ of the skew-symmetric form $\Theta$ parametrizing $A_\Theta$ is only determined up to the addition of an integer. Thus we can regard $\Theta$ as a point in a $p$-torus, where $p=\frac{d(d-1)}{2}$. The $p$-torus is the hypercube $[0,1]^p$ with opposite faces identified, and we can take $\Theta$ such that each $\Theta_{ij}\in [0,1)$. Given such a $\Theta$, there is always a contractible path from $0$ to $\Theta$, by taking $t\mapsto t\Theta$. Then the image $X$ of this path defines a $C^*$-algebra $A_X$ which is obtained from $C(X)$ by taking $d$ successive crossed products with $\ZZ$, cf.\ pp.\ 163--165 of \cite{Elliott}. We can regard $A_X$ as a continuous family $\{A_x\}_{x\in X}$ of noncommutative tori parameterized by $X$. We write $A_{X|}$ for the $(d-1)$-fold crossed product of $C(X)$ with $\ZZ$, thus $A_X=A_{X|}\rtimes\ZZ^{(d)}$.

For each $x\in X$, the noncommutative torus $A_x$ is itself the $d$-fold crossed product of $\mathbb{C}$ with $\ZZ$ in such a way that the evaluation projection $A_{X|}\rightarrow A_{x|}$ is equivariant for the $d$-th action of $\ZZ^{(d)}$. Also, the $\ZZ^{(d)}$-actions on $A_{X|}$ and $A_{x|}$ are homotopic to the identity. The contractibility of $X$ implies that these evaluation projections induce canonical isomorphisms in $K$-theory \cite{Elliott}. In particular, $K_\bullet(C(\TT^d))=K_\bullet(A_0)\cong K_\bullet(A_X)\cong K_\bullet(A_\Theta)$. Similarly, $K_\bullet(C(\TT^{d-1}))\cong K_\bullet(A_{X|})\cong K_\bullet(A_{\Theta|})$. Using these facts along with the functoriality of the PV-exact sequence with respect to $\ZZ$-equivariant homomorphisms (cf.\ pp. 47 of \cite{Raeburn}, pp. 164 of \cite{Elliott}), we obtain the commutative diagram
\beq
\xymatrix{
K_\bullet(C(\TT^d))\ar[d]^{\partial} & \ar[l]_{\;\;\sim} K_\bullet(A_X) \ar[d]^\partial \ar[r]^{\sim} & K_\bullet(A_\Theta) \ar[d]^\partial \\
K_{\bullet-1}(C(\TT^{d-1})) & \ar[l]_{\;\;\;\;\;\sim}  K_{\bullet-1}(A_{X|}) \ar[r]^\sim & K_{\bullet-1}(A_{\Theta|})
}\label{deformationcommutativediagram}
\eeq
Combining the commutative diagrams in \eqref{commutativeTdualitydiagram}, \eqref{deformationcommutativediagram} and Theorem \ref{theorem:nctduality}, we obtain:

\begin{theorem}[Noncommutative T-duality trivializes bulk-boundary homomorphism]\label{thm:tdualitytrivial}
The following diagram commutes:
\beq
\xymatrixcolsep{5pc}\xymatrix{
K_{\bullet-d}(C(\TT^d))\ar[d]^{\iota^*} \ar[r]^{\rm NC\,\, T-duality} &   K_{\bullet}(A_\Theta)\ar[d]^\partial \\
K_{\bullet-d}(C(\TT^{d-1})) \ar[r]^{\rm NC\,\, T-duality} & K_{\bullet-1}(A_{\Theta|})
}
\eeq
\end{theorem}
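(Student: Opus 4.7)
The plan is to combine the three previously established ingredients: the commutative T-duality trivialization (Theorem \ref{theorem:commutativeTdualitycomplex}), the factorization of noncommutative T-duality as commutative T-duality followed by deformation quantization (Theorem \ref{theorem:nctduality}), and the compatibility of the Pimsner--Voiculescu boundary map with the deformation-quantization isomorphism (diagram \eqref{deformationcommutativediagram}). The overall strategy is to exhibit the target square as the horizontal pasting of two squares, each already known to commute.

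First I would factor each horizontal NC T-duality arrow via Theorem \ref{theorem:nctduality} into commutative T-duality followed by the deformation-quantization isomorphism, producing the stacked diagram
$$
\xymatrix{
K^{\bullet+d}(\TT^d) \ar[d]^{\iota^*} \ar[r]^{T_{\TT^d}} & K^{\bullet}(\widehat{\TT^d}) \ar[d]^\partial \ar[r]^{\cong} & K^{\bullet}(A_\Theta) \ar[d]^\partial \\
K^{\bullet+d}(\TT^{d-1}) \ar[r]^{T_{\TT^{d-1}}} & K^{\bullet+1}(\widehat{\TT^{d-1}}) \ar[r]^{\cong} & K^{\bullet+1}(A_{\Theta|}),
}
$$
where I have passed to cohomology via $K_j(C(X)) \cong K^{-j}(X)$ to match the conventions of Theorem \ref{theorem:commutativeTdualitycomplex}. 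The horizontal composites are the NC T-duality isomorphisms appearing in the statement to be proved.

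Next I would invoke Theorem \ref{theorem:commutativeTdualitycomplex} for the left square, which is exactly diagram \eqref{commutativeTdualitydiagram}, and invoke diagram \eqref{deformationcommutativediagram} for the right square (its rightmost two columns, transported to cohomology). Pasting horizontally then gives commutativity of the outer rectangle, which is the claim.

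The main technical obstacle --- already absorbed into \eqref{deformationcommutativediagram} --- is the assertion that the canonical deformation-quantization isomorphism intertwines the PV boundary maps for $C(\widehat{\TT^d})$ and for $A_\Theta$. That rests on the existence of a $\ZZ^{(d)}$-equivariant evaluation map $A_X \to A_\Theta$, together with its restriction $A_{X|} \to A_{\Theta|}$, and on the functoriality of the PV exact sequence under such equivariant homomorphisms, both of which are already invoked in the paper. Once this is granted, the remaining work is purely diagrammatic, with only a small bookkeeping check to match grading conventions and to verify that the two identifications on the bottom row compose to the announced NC T-duality isomorphism $K_{\bullet-d}(C(\TT^{d-1})) \cong K_{\bullet-1}(A_{\Theta|})$.
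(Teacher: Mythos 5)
Your proposal is correct and follows essentially the same route as the paper, which obtains Theorem \ref{thm:tdualitytrivial} precisely by combining the commutative diagrams \eqref{commutativeTdualitydiagram} and \eqref{deformationcommutativediagram} with Theorem \ref{theorem:nctduality}. Your version merely spells out the horizontal pasting and the grading bookkeeping that the paper leaves implicit.
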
\bigskip

\section{Real T-duality and Wedge sums of spheres}
For T-duality computations in the real case, the Chern character does not work, so we embark on a different strategy involving the stable splitting of tori into spheres. This facilitates the expression of the real T-duality isomorphisms explicitly on generators. In general, this duality takes $K$-theory groups defined on a torus $\TT^d$ with trivial involution, to $K$-theory groups defined on a dual torus $\widehat{\TT^d}$ with non-trivial involution. The latter torus is the Brillouin zone for time-reversal invariant insulators, with the involution on $\widehat{\TT^d}$ due to the Fourier transform of complex-conjugation coming from the time-reversal operator. 

There are at least two different ways to interpret the torus $\TT^d$ with trivial involution. It could be thought of as the Brillouin zone for a ``T-dual'' $PT$-symmetric insulator (see Section \ref{section:PTsymmetric} and \cite{MT2}), or it could be the fundamental domain (in real-space) for the underlying $\ZZ^d$-translation symmetry (Section \ref{section:fundamentaldomain}).

\subsection{Stable splitting of tori}
By Proposition 4.I.1 of \cite{Hatcher}, there is a homotopy equivalence 
\beq
    \Sigma(X\times Y)\simeq \Sigma X\vee\Sigma Y\vee\Sigma(X\wedge Y)\label{basicstablesplitting}
\eeq
for (base-pointed) CW complexes $X, Y$, where $\Sigma$ is the reduced suspension and $\vee$ is the wedge sum. For example, $\TT^{(i_1)}\times\TT^{(i_2)}\cong\TT^2$ and $\TT^{(i_1)}\vee \TT^{(i_2)}\vee (\TT^{(i_1)}\wedge\TT^{(i_2)})\cong S^1\vee S^1\vee S^2$ are homotopy equivalent after taking a suspension. Note that each circle $\TT^{(i)}$ has a basepoint $k=0$. It is convenient to write $S^I\coloneqq \TT^{(i_1)}\wedge \ldots \wedge \TT^{(i_n)}\cong S^{|I|}$. Then iterating \eqref{basicstablesplitting}, we obtain
\begin{lemma}[Stable splitting of the torus]\label{lemma:stablesplitting}
The $d$-torus $\TT^d$ is stably homotopy equivalent to a wedge sum of spheres,
\beq
    \TT^d \stackrel{{\rm stable}}{\simeq}\bigvee_{1\leq |I|\leq d} S^I \cong \bigvee_{n=1}^d (S^n)^{\vee{\binom {d}{n}}}.\label{stablesplitting}
\eeq
\end{lemma}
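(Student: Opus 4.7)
The plan is to argue by induction on $d$, with the single splitting \eqref{basicstablesplitting} as the only real ingredient. In fact a single reduced suspension of $\TT^d$ already exhibits the wedge decomposition, so the argument yields more than the stated stable equivalence. The base case $d=1$ is trivial, since $\TT^1=\TT^{(1)}=S^1$ is itself the claimed wedge with the single summand $S^{\{1\}}$.

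For the inductive step, assume the homotopy equivalence
\[
    \Sigma\TT^{d-1}\;\simeq\;\bigvee_{\emptyset\neq I\subseteq\{1,\ldots,d-1\}}\Sigma S^I.
\]
Write $\TT^d=\TT^{d-1}\times\TT^{(d)}$, fix a basepoint on each circle $\TT^{(i)}$ (say $k=0$), and apply \eqref{basicstablesplitting}:
\[
    \Sigma\TT^d\;\simeq\;\Sigma\TT^{d-1}\,\vee\,\Sigma\TT^{(d)}\,\vee\,\Sigma(\TT^{d-1}\wedge\TT^{(d)}).
\]
The first summand splits by the inductive hypothesis, contributing the $J\subseteq\{1,\ldots,d-1\}$ with $J\neq\emptyset$; the middle summand is $\Sigma S^1=\Sigma S^{\{d\}}$, contributing $J=\{d\}$. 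For the third summand, smashing with $S^1$ is suspension, so $\Sigma(\TT^{d-1}\wedge S^1)\simeq\Sigma^2\TT^{d-1}$, and re-applying the inductive splitting (using $S^I\wedge S^{\{d\}}=S^{I\cup\{d\}}$) gives a wedge of $\Sigma S^{I\cup\{d\}}$ over non-empty $I\subseteq\{1,\ldots,d-1\}$. This accounts for the remaining $J$ with $d\in J$ and $|J|\geq 2$. Re-indexing the three pieces by non-empty $J\subseteq\{1,\ldots,d\}$ assembles to
\[
    \Sigma\TT^d\;\simeq\;\bigvee_{\emptyset\neq J\subseteq\{1,\ldots,d\}}\Sigma S^J,
\]
whence the stable equivalence \eqref{stablesplitting} after stripping off the extra suspension. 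The identification with $\bigvee_{n=1}^d(S^n)^{\vee\binom{d}{n}}$ is then just the count of $n$-element subsets of $\{1,\ldots,d\}$.

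I do not anticipate any serious obstacle. The only technical care required is basepoint bookkeeping, since \eqref{basicstablesplitting} is a statement about \emph{pointed} CW complexes; the consistent choice of a basepoint on each circle factor makes all the wedges and smashes above well-defined and makes the final re-indexing by subsets unambiguous.
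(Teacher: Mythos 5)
Your argument is correct and is essentially the paper's own proof: the paper obtains the lemma simply by ``iterating'' \eqref{basicstablesplitting}, and your induction on $d$ is exactly that iteration made explicit. The observation that a single reduced suspension already splits $\TT^d$ (because $\Sigma$ commutes with wedge sums and $S^I\wedge S^{\{d\}}=S^{I\cup\{d\}}$) is a valid, slightly stronger form of the statement.
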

Since $K$-theory is a stable homotopy invariant, and the reduced $K$-theory of a wedge sum is the direct sum of the reduced $K$-theory of the summands, Lemma \ref{lemma:stablesplitting} gives an alternative way to compute the $K^\bullet(\TT^d)$ which also works for the real case. 
Namely, after taking a suitable number of suspensions, and writing $S^\emptyset = S^0$, we obtain
\beq
    K^\bullet(\TT^d)\cong \widetilde{K}^\bullet (S^0)\oplus\widetilde{K}^\bullet(\TT^d)\cong\bigoplus_I \widetilde{K}^\bullet(S^I)\cong \bigoplus_{n=0}^d \bigoplus_{|I|=n} \widetilde{K}^\bullet(S^I).\label{stablesplittingKtheory}
\eeq
Thus, we can identify the subgroup of $K^*(\TT^d)$ generated by $dx^I$ with $\widetilde{K}^{|I|}(S^I)$, where $|I|$ is taken modulo 2. 

We also write $\widehat{S^I}\coloneqq \widehat{\TT^{(i_1)}}\wedge \ldots \wedge \widehat{\TT^{(i_n)}}\cong S^{|I|}$ for the ``dual'' spheres and tori. The same stable splitting applies for $\widehat{\TT^d}$,
\beq
    \widehat{\TT^d} \stackrel{{\rm stable}}{\simeq} \bigvee_{1\leq |I|\leq d} \widehat{S^I},\label{dualstablesplitting}
\eeq
as well as its $K$-theory,
\beq
    K^\bullet(\widehat{\TT^d})\cong \bigoplus_I \widetilde{K}^\bullet(\widehat{S^I}).\label{stablesplittingKtheorydual}
\eeq
Then we see that the T-duality map (or Fourier--Mukai transform) $dx^I\leftrightarrow \pm dy^{I^c}$ computed in \eqref{Tdualitybasicformula} and implemented by the Poincar\'{e} line bundle $\mathscr{P}$, corresponds to the ``Poincar\'{e} duality'' isomorphisms
\beq
    \widetilde{K}^\bullet(S^I)\stackrel{\sim}\longleftrightarrow\widetilde{K}^{\bullet-d}(\widehat{S^{I^c}})\label{complexpoincareduality}
\eeq
for each factor in the decompositions \eqref{stablesplittingKtheory} and \eqref{stablesplittingKtheorydual}. This expresses the isomorphisms,
\beq
    K^\bullet(\TT^d)\cong K^{\bullet-d}(\widehat{\TT^d}),\nonumber
\eeq
explicitly, generator-by-generator.

For the subtorus $\TT^{d-1}$, we also have the stable splitting
\beq
    \TT^{d-1} \stackrel{{\rm stable}}{\simeq}\bigvee_{1\leq |I|\leq {d-1}} S^I,\label{lowerstablesplitting}
\eeq
where $d\not\in I$ in the multi-index. The wedge sum in \eqref{lowerstablesplitting} includes naturally into that in \eqref{stablesplitting}, and the induced restriction map in $K$-theory respects the direct sum \eqref{stablesplittingKtheory}, i.e.\ $\iota^*$ takes $K^\bullet(S^I)$ to $K^\bullet(S^I)$ if $d\not\in I$ and is the zero map otherwise. Similarly, the boundary map $\partial$ respects the direct sum decomposition; it takes $\widetilde{K}^\bullet(S^{I^c})\rightarrow\widetilde{K}^{\bullet+1}(S^{I^c\setminus\{d\}})$ if $d\in I^c$ and is the zero map if $d\not \in I^c$. Thus we have a useful alternative way to express $\partial$ as a push-forward map without using differential forms, which can be carried over to real case.

\subsection{T-duality trivializes bulk-boundary homomorphism: real case}
The real $KO$-theory functors can be applied to Lemma \ref{lemma:stablesplitting}, giving the decomposition
\beq
    KO^\bullet(\TT^d)\cong \bigoplus_I \widetilde{KO}^\bullet(S^I).\nonumber
\eeq
However, $KO$-theory does not provide the appropriate topological invariants for time-reversal invariant topological insulators. The complex bundle of valence states over the Brillouin torus is required to host the action of an antilinear time-reversal operator $\mathsf{T}$. The Brillouin torus $\widehat{\TT^d}$ is regarded as a Real space with involution $k\mapsto -k$ inherited from the complex conjugation of characters for $\ZZ^d$. The valence bundle comes with a Real ($\mathsf{T}^2=+1$) or Quaternionic ($\mathsf{T}^2=-1$) structure, and defines a class in the Real $KR$-theory or Quaternionic $KQ$-theory of $\widehat{\TT^d}$ \cite{FM,deNittis,MT,MT2,LKK}. Quaternionic and Real $K$-theories are related by a degree shift of $4$, and for notational convenience, we work mostly with $KR$-theory. Real T-duality was discussed in \cite{Rosenberg} in the context of the real Baum--Connes conjecture. It can be expressed as the real Baum--Connes assembly map following Poincar\'{e} duality. We work with a special case of real T-duality at a more concrete level, expressing the isomorphisms at the level of $K$-theory generators. 

We consider the dual spheres $\widehat{S^I}$ as Real spaces as follows. First, $\widehat{S^1}$ is the Real space $S^1$ with involution $k\mapsto -k$, with base-point $k=0$ which is a fixed point. Each dual circle $\widehat{\TT^{(i)}}$ is homeomorphic to $\widehat{S^1}$ as Real space. Note that $\widehat{S^1}$ is sometimes written as $S^{1,1}$, which is the unit circle in $\mathbb{R}^{1,1}$ where the involution in the latter is $(w_1,w_2)\mapsto (w_1,-w_2)$ and the base-point is $(1,0)$. Similarly, $\widehat{S^n}$ as a Real space is the unit $n$-sphere in $\mathbb{R}^{1,n}$, and we have $\widehat{S^{n_1}}\wedge \widehat{S^{n_2}}\cong \widehat{S^{n_1+n_2}}$. Each $\widehat{S^I}$ is homeomorphic as a Real space to $\widehat{S^{|I|}}$. We regard the dual torus $\widehat{\TT^d}$ as the Real space $\widehat{\TT^{(i_1)}}\times\ldots\widehat{\TT^{(i_d)}}$.
The reduced suspension $\widehat{\Sigma}$ taken in the Real sense is the smash product with $\widehat{S^1}$, and there is again a stable splitting \eqref{dualstablesplitting}, now regarded in the category of Real spaces. We thus have
\beq
    KR^\bullet(\widehat{\TT^d})\cong \bigoplus_I \widetilde{KR}^{\bullet}(\widehat{S^I}).\nonumber
\eeq
Our convention for the Real $K$-theory groups is $\widetilde{KR}^{-n}(X)=\widetilde{KR}^0(S^n\wedge X)=\widetilde{KR}^0(\Sigma X)$ and $\widetilde{KR}^{n}(X)=\widetilde{KR}^0(\widehat{S^n}\wedge X)=\widetilde{KR}^0(\widehat{\Sigma} X)$.

Although we can no longer represent the real/Real $K$-theory generators of $\TT^d$ and $\widehat{\TT^d}$ by differential forms, we still have the ``Poincar\'{e} duality'' isomorphisms
\beq
 \widetilde{KO}^{\bullet+d}(S^I)\stackrel{\sim}\longleftrightarrow\widetilde{KR}^\bullet(\widehat{S^{I^c}}),\label{realpoincareduality}
\eeq
which is the real analogue of \eqref{complexpoincareduality}. Note that $\widetilde{KO}^{\bullet+d}(S^I)$ and $\widetilde{KR}^\bullet(\widehat{S^{I^c}})$ are both isomorphic to $KO^{\bullet+d-|I|}({\rm pt})$. The isomorphisms \eqref{realpoincareduality} assemble to give an explicit isomorphism
\beq
    KO^{\bullet+d}(\TT^d)\cong \bigoplus_I \widetilde{KO}^{\bullet+d}(S^I)\cong \bigoplus_{I^c} \widetilde{KR}^\bullet(\widehat{S^{I^c}})\cong KR^\bullet(\widehat{\TT^d}).\nonumber
\eeq

In analogy to the complex case, the boundary map $\partial:KR^\bullet(\widehat{\TT^d})\rightarrow KR^{\bullet+1}(\widehat{\TT^{d-1}})$, or push-forward map along the $d$-th coordinate, is taken to be $KR^{\bullet}(\widehat{S^{I^c}})\rightarrow KR^{\bullet+1}(\widehat{S^{I^c\setminus\{d\}}})$ if $d\in I^c$ and the zero map otherwise. The restriction map $\iota^*$ is the obvious one, so we have the real analogue of Theorem \ref{theorem:commutativeTdualitycomplex}:
\begin{theorem}[T-duality trivializes bulk-boundary homomorphism, real case version I]\label{thm:realcase}
The following diagram commutes:
\beq
\xymatrix{
KO^{\bullet+d}(\TT^d)\ar[d]^{\iota^*} \ar[r]^{T_{\TT^d}} &   KR^{\bullet}(\widehat{\TT^d})\ar[d]^\partial \\
KO^{\bullet+d}(\TT^{d-1}) \ar[r]^{T_{\TT^{d-1}}} & KR^{\bullet+1}(\widehat{\TT^{d-1}})
}\label{commutingdiagramreal}
\eeq
\end{theorem}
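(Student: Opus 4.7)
The plan is to mirror the strategy of Theorem \ref{theorem:commutativeTdualitycomplex} but, since the Chern character is unavailable in the real case, to work summand-by-summand with respect to the stable splittings of $\TT^d$ and $\widehat{\TT^d}$. The starting point is the decomposition
\[
KO^{\bullet+d}(\TT^d)\cong\bigoplus_I\widetilde{KO}^{\bullet+d}(S^I),\qquad KR^\bullet(\widehat{\TT^d})\cong\bigoplus_I\widetilde{KR}^\bullet(\widehat{S^{I^c}}),
\]
under which $T_{\TT^d}$ is, by construction, the direct sum of the real Poincar\'e duality isomorphisms \eqref{realpoincareduality}, one for each $I\subseteq\{1,\ldots,d\}$. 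An analogous decomposition holds for the bottom row of \eqref{commutingdiagramreal}, now with $I\subseteq\{1,\ldots,d-1\}$, and $T_{\TT^{d-1}}$ is the corresponding sum of real Poincar\'e dualities on the lower-dimensional spheres.

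Next I would verify that both vertical maps are diagonal with respect to these splittings. The inclusion $\iota:\TT^{d-1}\hookrightarrow\TT^d$ is compatible with the stable splittings of Lemma \ref{lemma:stablesplitting} in such a way that $\iota^*$ is the identity on the summand $\widetilde{KO}^{\bullet+d}(S^I)$ for $d\notin I$ and is the zero map when $d\in I$, since those $S^I$ with $d\in I$ are collapsed to the basepoint under $\iota$. Dually, the Pimsner--Voiculescu boundary $\partial$ acts on $\widetilde{KR}^\bullet(\widehat{S^{I^c}})$ as the desuspension-type isomorphism $\widetilde{KR}^\bullet(\widehat{S^{I^c}})\to\widetilde{KR}^{\bullet+1}(\widehat{S^{I^c\setminus\{d\}}})$ induced by the smash-product decomposition $\widehat{S^{I^c}}\cong\widehat{S^{I^c\setminus\{d\}}}\wedge\widehat{\TT^{(d)}}$ when $d\in I^c$, and vanishes when $d\notin I^c$. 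This second claim would be deduced from the functoriality of the PV sequence applied to $\widehat{\TT^d}=\widehat{\TT^{d-1}}\times\widehat{\TT^{(d)}}$, combined with the basic Toeplitz extension \eqref{basictoeplitzextension} in its $KR$-theoretic form.

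Once both verticals are recognised as diagonal, the commutativity of \eqref{commutingdiagramreal} reduces to a check on each summand indexed by $I$. When $d\in I$, $\iota^*$ is zero on the left and $\partial$ is zero on the right, so both composites vanish. When $d\notin I$, $\iota^*$ is the identity and both composites land in $\widetilde{KR}^{\bullet+1}(\widehat{S^{I^c\setminus\{d\}}})$. The required equality then amounts to the statement that real Poincar\'e duality commutes with desuspension along the $\widehat{\TT^{(d)}}$ factor, which is immediate from the observation that this factor does not contribute to the $KO$-side summand $S^I$ and so the duality isomorphism splits off a $\widehat{\TT^{(d)}}$-worth of degree shift intact.

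The main obstacle, and the place that demands the most care, is the sphere-by-sphere description of $\partial$ in the real setting. In the complex case one could compute $\partial$ as $-\int_{\TT^{(d)}}$ via the Chern character; here one must instead derive the analogous fact structurally, by pairing the stable splitting \eqref{dualstablesplitting} of $\widehat{\TT^d}$ as a pointed Real space with the $KR$-theoretic PV six-term exact sequence and tracking which sphere summands carry a $\widehat{\TT^{(d)}}$ smash-factor. Once this compatibility is in place (the companion statement for $\iota^*$ is a direct consequence of pointed homotopy invariance of the stable splitting), the remainder of the argument is formal, yielding a sign-free real analogue of the proof of Theorem \ref{theorem:commutativeTdualitycomplex}.
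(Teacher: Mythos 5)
Your proposal is correct and follows essentially the same route as the paper: decompose both tori via the stable splitting into wedges of spheres, observe that $T_{\TT^d}$, $\iota^*$ and $\partial$ are all diagonal with respect to the resulting direct-sum decompositions of $KO^{\bullet+d}(\TT^d)$ and $KR^{\bullet}(\widehat{\TT^d})$, and check commutativity summand by summand. The only point of divergence is that the paper sidesteps what you call the main obstacle --- it simply \emph{defines} $\partial$ on the $KR$-side to be the summand-wise map $\widetilde{KR}^{\bullet}(\widehat{S^{I^c}})\rightarrow\widetilde{KR}^{\bullet+1}(\widehat{S^{I^c\setminus\{d\}}})$ (in analogy with the complex computation), rather than deriving this from a $KR$-theoretic Pimsner--Voiculescu sequence as you propose, so your derivation would be a strengthening of, but is not required for, the paper's formulation.
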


\subsubsection{Fundamental domain in real space}\label{section:fundamentaldomain}
The torus $\TT^d$ appearing at the top-left of Theorem \ref{thm:realcase} is the (real) classifying space $\RR^d/\ZZ^d$ for the group $\ZZ^d$ of translations. Physically, it is the fundamental domain, or unit cell in real space for a lattice in $\RR^d$. Because the time-reversal operator acts \emph{pointwise} in real-space, this fundamental domain as a Real space (i.e.\ a space with $\ZZ_2$-action) has the trivial involution, in contrast to the momentum-space Brillouin torus $\widehat{\TT^d}$. From this point of view, the momentum-space $K$-theory invariants in $KR^{\bullet}(\widehat{\TT^d})$ have real-space counterparts in $KO^{\bullet+d}(\TT^d)$ under the real T-duality isomorphisms. In the real-space picture, the map $\iota^*$ is simply restriction onto the fundamental subdomain for $\ZZ^{d-1}$ (the subgroup of translation symmetries for the boundary). The commuting diagram \eqref{commutingdiagramreal} is then the statement that this real-space restriction homomorphism is T-dual to the momentum-space bulk-boundary homomorphism.

Apart from $\iota^*$ being conceptually simpler than $\partial$, there is also the advantage that the ordinary $KO$-theory groups are more directly related to classically known characteristic classes for real vector bundles. An example of this is the interpretation of the classical Stiefel--Whitney classes as the T-dual to the physicists' Fu--Kane--Mele \cite{FKM} invariants, as explained in \cite{MT2}.

\subsubsection{PT-symmetric insulators}\label{section:PTsymmetric}
If the time-reversal symmetry also effects spatial inversion (but time-reversal and spatial inversion are not \emph{separately} symmetries), then the involution on the Brillouin torus due to antilinearity is cancelled out by that due to inversion. We write $(PT)$ for such a space-inverting and time-reversing symmetry element, and $(\mathsf{P}\mathsf{T})$ for its realization as an antilinear map on the valence bundle. Since $(\mathsf{P}\mathsf{T})$ provides an \emph{ordinary} real (if $(\mathsf{P}\mathsf{T})^2=+1$) or quaternionic (if $(\mathsf{P}\mathsf{T})^2=-1$) structure on the valence bundle, and ordinary $KO$-theory and quaternionic $KSp$-theory differ by a degree shift of 4, we can use $KO$-theory to study such $(PT)$-symmetric insulators \cite{MT}. Note that the group of symmetries is now a semi-direct product $\ZZ^d\rtimes\{1,(PT)\}$, whereas we had $\ZZ^d\times\{1,T\}$ earlier on.

In this case, a bulk-boundary homomorphism should take place on the $KO$-theory side, taking $KO^\bullet(\TT^d)\cong KO_{-\bullet}(C(\TT^d,\mathbb{R}))$ to $KO^{\bullet-1}(\TT^{d-1})\cong KO_{-\bullet+1}(C(\TT^{d-1},\mathbb{R}))$. Note, however, that the real $C^*$-algebra $C(\TT^d,\mathbb{R})$ is not simply obtained from $C(\TT^{d-1},\mathbb{R})$ by a crossed product with $\ZZ^{(d)}$.

We thus \emph{define} $\partial$ to be $\widetilde{KO}^\bullet(S^I)\xrightarrow{\sim}\widetilde{KO}^{\bullet-1}(S^{I\setminus\{d\}})$ if $d\in I$, and the zero map otherwise. The restriction $\iota^*:KR^\bullet(\widehat{\TT^d})\rightarrow KR^\bullet(\widehat{\TT^d})$ takes $\widetilde{KR}^\bullet(\widehat{S^I})$ isomorphically to $\widetilde{KR}^\bullet(\widehat{S^I})$ if $d\not \in I$ and is zero otherwise. Then it is straightforward to see that T-duality turns $\partial$ into the restriction $\iota^*$ on the $KR$-theory side, as summarized in the commutative diagram
\beq
\xymatrix{
KO^{\bullet}(\TT^d)\ar[d]^{\partial} \ar[r]^{T_{\TT^d}} &   KR^{\bullet-d}(\widehat{\TT^d})\ar[d]^{\iota^*} \\
KO^{\bullet-1}(\TT^{d-1}) \ar[r]^{T_{\TT^{d-1}}} & KR^{\bullet-d}(\widehat{\TT^{d-1}})
}.
\eeq

\subsection{Higher-codimensional bulk-boundary homomorphism}
In principle, we can also consider codimension-$n$ boundaries with $1<n\leq d$, then the bulk-boundary homomorphism should involve a push-forward along the $n$ transverse directions, which we can take to be labelled by the last $n$ coordinates without loss of generality. This may be achieved by iterating $\partial$, so that $\partial^{(n)}\coloneqq\partial\circ\ldots\circ\partial:K^\bullet(\widehat{\TT^d})\rightarrow K^{\bullet+n}(\widehat{\TT^{d-n}})$. Composition of the restrictions $\iota^*$ is simply the $K$-theory map induced by the inclusion $\iota^{(n)}:\TT^{d-n}\hookrightarrow\TT^d$. Since we have a commutative diagram like \eqref{commutativeTdualitydiagram} at each stage, we also obtain a commutative diagram
\beq
\xymatrix{
K^{\bullet+d}(\TT^d)\ar[d]^{(\iota^{(n)})^*} \ar[r]^{T_{\TT^d}} &   K^{\bullet}(\widehat{\TT^d})\ar[d]^{\partial^{(n)}} \\
K^{\bullet+d}(\TT^{d-n}) \ar[r]^{T_{\TT^{d-n}}} & K^{\bullet+n}(\widehat{\TT^{d-n}})
}
\eeq
and similarly for the noncommutative and real cases.\\


\section{Four dimensional quantum Hall effect}
In this section, we apply main Theorem \ref{thm:tdualitytrivial} to analyse the bulk-boundary correspondence for the 
4D quantum Hall effect (as studied in \cite{Prodan3,ProdanSB,KRZ,Zhang} for example) via T-duality. We show that cyclic cohomology pairings with $K$-theory, can be computed on the T-dual side in terms of integrals over the torus that are easy to compute.

Consider the noncommutative torus $A_\Theta$ when $d=4$, which is generated by four unitaries $U_1, U_2, U_3, U_4$ subject to the relations 
 $$U_i U_j = e^{2\pi {\rm i}\Theta_{ij}} U_j U_i, \qquad (1\le i,j \le 4).$$ From the work of Elliott \cite{Elliott}, the $K$-theory of $A_\Theta$ can be identified with that of $C(\TT^4)$. Namely, $K_0(A_\Theta)\cong\Lambda^{\rm even}\ZZ^4\cong\ZZ^8,\,K_1(A_\Theta)\cong\Lambda^{\rm odd}\ZZ^4\cong \ZZ^8$, and the (total) Chern character can be used to distinguish classes from one another. For this section, $A_\Theta$ is understood to be the smooth version of the noncommutative torus as in Sect.\ \ref{section:NCtori}, which has the same $K$-theory. Note that if we write ${\bf B}$ for the two-form $\frac{1}{2}dx^t\Theta dx$, where $dx$ is the column vector of one-forms $(dx^1,dx^2,dx^3,dx^4)$, then ${\bf B}$ generalises the magnetic field 2-form in the 2D quantum Hall effect. Although we work in the $d=4$ case, the analysis presented in this section works equally well for any even $d$.

 Let $$I=\{i_1, \ldots,i_k\},\qquad 1\leq i_1<\ldots i_k\leq 4$$ be a multi-index, with complementary multi-index $I^c$, and let $\Theta_I$ denote the submatrix $(\Theta_{ij})$ with $i,j\in I$. Let $\delta_1, \delta_2, \delta_3, \delta_4$ be the standard derivations on  $A_\Theta$ such that $\delta_j(U_k) = \delta_{jk} U_k$. The noncommutative second Chern class on $A_\Theta$ is given, up to a normalisation, by the expression \cite{Connes85,Connes94,Nest2},
\begin{align}
& c_{\rm top}(a_0, a_1,a_2,a_3,a_4) =\\
& \sum_{\eta\in S_4} {\rm sign}(\eta)\,
\tau\left(a_0\delta_1(a_{\eta(1)})\delta_2(a_{\eta(2)})\delta_3(a_{\eta(3)})\delta_4(a_{\eta(4)})\right),\nonumber
\end{align}
with $\eta$ running over the permutations of 4 elements and $\tau$ the von Neumann trace. Here, $c_{\rm top}$ is a cyclic 4-cocycle on the noncommutative 4-torus $A_\Theta$. There is a pairing of $c_{\rm top}$ with $K_0(A_\Theta)$ given by the usual formula
$c_{\rm top}([P])=\frac{1}{2}c_{\rm top}(P,P,P,P,P)$ with $c_{\rm top}$ extended to matrix algebras over $A_\Theta$. The pairing is integral, that is, $c_{\rm top}([P])\in \ZZ$ for any projection $P\in M_N(A_\Theta)$. 

When $|I|=2$, we define $\mathcal{P}_I$ to be the Rieffel projection \cite{Rieffel} for the noncommutative 2-subtorus\footnote{We assume for simplicity that $\Theta_{i_1i_2}\in(0,1)$, otherwise the Rieffel projection should be replaced by the Bott projection for the 2-torus.}
 $A_{\Theta_I}$ generated by $U_{i_1}, U_{i_2}$. There are six independent first Chern classes $c_I, |I|=2$, given up to a normalization by the formula
\beq
c_I(a_0, a_1,a_2) = \sum_{\eta\in S_2} {\rm sign}(\eta)\, \tau\left(a_0\delta_{i_1}(a_{\eta(1)})\delta_{i_2}(a_{\eta(2)})\right),\nonumber
\eeq
and they are such that their pairings with the Rieffel projections are $c_I([\cP_J])=1$ if $I=J$ and zero otherwise. The trace of a Rieffel projection satisfies $\tau(\cP_I)=\Theta_{i_1i_2}$, and can be written more invariantly as a Pfaffian $\tau(\cP_I)={\rm Pf}\,\Theta_I$, where $\Theta_I$ denotes the $2\times 2$ antisymmetric submatrix whose off-diagonal entries are $\pm \Theta_{i_1i_2}$. Since $\delta_k(\cP_I) =0$ if $k\notin I$, we have $c_{\rm top}([\cP_I])=0$. Also, $c_{\rm top}([{\bf 1}_{A_\Theta}])=0=c_I([{\bf 1}_{A_\Theta}])$ where ${\bf 1}_{A_\Theta}$ denotes the unit of $A_\Theta$. With this notation, we can write
\begin{equation}
K_0(A_\Theta) = \ZZ[\cP]\oplus \bigoplus_{|I|=2}  \ZZ[\cP_I]\oplus \ZZ[{\bf 1}_{A_\Theta}],\label{nctorusk0generators}
\end{equation}
where $[\cP]$ is a final independent generator which has $c_{\rm top}([\cP])=1$.

The restricted 3D noncommutative torus $A_{\Theta|}$ is generated by three unitaries $U_1, U_2, U_3$ subject to $U_i U_j = e^{2\pi {\rm i}\Theta_{ij}} U_j U_i, \; (1\le i,j \le 3).$ The cyclic 3-cocycle $c^{\rm odd}_{\rm top}$ representing the odd top Chern class for $A_{\Theta|} $ is, up to a normalization,
 \begin{align}
 c^{\rm odd}_{\rm top}(a_0, a_1,a_2,a_3) =  \sum_{\eta\in S_3} {\rm sign}(\eta)\, \tau\left(a_0 \delta_1(a_{\eta(1)})\delta_2(a_{\eta(2)})\delta_3(a_{\eta(3)})\right),&\nonumber\\a_j\in A_{\Theta|},&\nonumber
 \end{align}
and extends to matrix algebras over $A_{\Theta|}$. The odd cocycle $c^{\rm odd}_{\rm top}$ pairs with classes $[U]$ in $K_1(A_{\Theta|})$ in the usual way, $$c^{\rm odd}_{\rm top}([U])=c^{\rm odd}_{\rm top}(U^{-1}-{\bf 1}_{A_{\Theta|}},U-{\bf 1}_{A_{\Theta|}},U^{-1}-{\bf 1}_{A_{\Theta|}},U-{\bf 1}_{A_{\Theta|}}).$$ In particular, $c^{\rm odd}_{\rm top}([U_i])=0, i=1,2,3$, since $\delta_k(U_i)=0$ if $k\neq i$. There are also three independent ``winding numbers'' built from the 1-cocycles $c^{\rm odd}_i, i=1,2,3$, given by
 \begin{equation}
 c^{\rm odd}_i(a_0, a_1) =  \tau\left(a_0 \delta_i(a_1)\right),\qquad a_j\in A_{\Theta|},\nonumber
\end{equation}
which are such that $c^{\rm odd}_i([U_j])=\delta_{ij}$. The odd $K$-theory of $A_{\Theta|}$ is
\beq
 K_1(A_{\Theta|}) \cong  \ZZ[U_1] \oplus \ZZ[U_2] \oplus \ZZ[U_3] \oplus \ZZ[\cU],\label{nctorusk1generators}
\eeq 
where $\cU$ is a unitary such that $c^{\rm odd}([\cU])=1$.

There are also boundary maps in cyclic cohomology which are dual to the Pimsner--Voiculescu boundary map \cite{Nest}. One may proceed to evaluate the pairings $c_{\rm top}([P])$ and $c^{\rm odd}_{\rm top}(\partial[P])$ and show using a duality theorem that they are equal, c.f.\ Chapter 5.5 of \cite{ProdanSB} and references therein. We show, on the other hand, that an analogous computation can be done on the T-dual side, which has a further advantage that the $K$-theory generators $[\cP]$ and $[\cU]$ are more explicit\footnote{We thank E.\ Prodan for pointing out that the construction of \cite{Sudo}, which we had used in an earlier version of this paper, does not work here.}.

In more detail, $[\cP]$ can be given by the twisted higher index theorem, cf.\ Section 2 of \cite{Marcolli} and \cite{Mathai99}, together with the fact that the twisted Baum--Connes map $\mu_\Theta: K^0(\TT^4)\rightarrow K_0(A_\Theta)$ is an isomorphism in this case. As in Sect.\ \ref{sec:nctduality}, $\mu_\Theta$ is a strict deformation of the Baum--Connes map $\mu_0:K^0(\TT^4)\rightarrow K_0(A_0)=K_0(C(\widehat{\TT^4}))$, with the latter being the same as the Connes--Thom isomorphism in Sect.\ \ref{sec:ctduality}. Thus we consider a deformation parameter $t\in[0,1]$ and $\mu_{t\Theta}:K^0(\TT^4)\rightarrow K_0(A_{t\Theta})$ as a deformed index map,
$$ \mu_{t\Theta}([E])={\rm index}_{A_{t\Theta}}(\dirac_{\RR^4}\otimes {\bf A}\otimes\wt{\nabla^E}),\qquad [E]\in K^0(\TT^4),$$
where ${\bf A}$ is a one-form such that $d{\bf A}={\bf B}$,  $E$ is a vector bundle on the 4D torus,
$\nabla^E$ is a hermitian connection on $E$ and $\wt{\nabla^E}$ is the lift of the connection on the lifted vector bundle $\wt{E}$  over $\RR^4$;
this implements the noncommutative T-duality in Sect.\ \ref{sec:nctduality}.

We can now define $[\cP]$ to be the the image $\mu_\Theta[{\bf 1}]$ of the trivial line bundle ${\bf 1}$ over $\TT^4$. More generally, instead of \eqref{nctorusk0generators}, we can conveniently write $K_0(A_\Theta)$ in terms of the images under $\mu_\Theta$ of the natural generators of $K^0(\TT^4)$,
\beq
 K_0(A_\Theta) \cong  \ZZ[\mu_\Theta[{\bf 1}]] \oplus \bigoplus_{|I|=2}  \ZZ[\mu_\Theta[\wt{\cL_I}]] \oplus \ZZ[\mu_\Theta[\wt{\cE}]]\;\; ;
\eeq 
here $[\wt{\cL_I}]=[\cL_I]-[{\bf 1}]$ where $\cL_I$ is the line bundle with first Chern class $dx^I$, and $[\wt{\cE}]=[\cE]-[{\rm rank}(\cE)\cdot{\bf 1}]$ where $\cE$ is a vector bundle with vanishing first Chern class and second Chern class the volume form.

Similarly, there is a restricted twisted Baum--Connes map $\mu_{\Theta|}:K^0(\TT^3)\rightarrow K_1(A_{\Theta|})$, and we define $[\cU]\in K_1(A_{\Theta|})$ to be $\mu_{\Theta|}[{\bf 1}]$; instead of \eqref{nctorusk1generators}, we can write
\beq
 K_1(A_{\Theta|}) \cong  \ZZ[\mu_{\Theta|}[{\bf 1}]] \oplus \bigoplus_{|I'|=2}  \ZZ[\mu_{\Theta|}[\wt{\cL_{I'}}]].\label{nctorusk1generators2}
\eeq 
Note that we have abused notation slightly in \eqref{nctorusk1generators2} --- ${\bf 1}$ and $\cL_{I'}$ are now bundles over $\TT^3$ --- and that $I'$ is a multi-index in $\{1,2,3\}$.

The cyclic cocycles $c_{\rm top}, c_I$, etc.\ can also be understood on the T-dual side, using Connes' map $\mu_\Theta^{\rm cyc}: H^{\rm even}(\TT^4)\rightarrow HP^0(A_\Theta)$. Namely, under the Eilenberg-Maclane 
isomorphism $H^*(\TT^4)\cong H^*(\ZZ^4)$, the form $dx^I$ determines a group cocycle $C_I$ on $\ZZ^4$, which in turn determines a periodic
cyclic cocycle $\mu_\Theta^{\rm cyc}(dx^I)\in HP^0(A_\Theta)$, cf. \cite{Marcolli}. For example, the volume form, ${\rm vol}$, gives the ``volume'' group 4-cocycle $C_{\rm vol}\equiv C_{1234}$,
$$C_{\rm vol}(g_1,g_2,g_3,g_4)={\rm det}[g_1,g_2,g_3,g_4].$$
This then defines a cyclic 4-cocycle, defined on delta functions $\Delta_g, g\in\ZZ^4$ (which generate the twisted group algebra) by
\begin{align}
&   (\mu^{\rm cyc}_\Theta({\rm vol}))(\Delta_{g_1},\Delta_{g_2},\Delta_{g_3},\Delta_{g_4})\\
& \equiv C_{\rm vol}(g_1,g_2,g_3,g_4)\cdot
\;\;\;\tau\left(\Delta_{g_0}\star\Delta_{g_1}\star\Delta_{g_2}\star\Delta_{g_3}\star\Delta_{g_4}\right)\nonumber\\
    &= {\rm det}[g_1,g_2,g_3,g_4]\cdot 
\;\;\;\tau\left(\Delta_{g_0}\star\Delta_{g_1}\star\Delta_{g_2}\star\Delta_{g_3}\star\Delta_{g_4}\right)\nonumber,
\end{align}
where $\star$ is the twisted convolution product. Note that the derivations $\delta_i$ are such that $\delta_i(\Delta_g)$ is the $i$-th component $g^i$ of $g$. Then we also have
\begin{eqnarray}
    &&c_{\rm top}(\Delta_{g_0},\Delta_{g_1},\Delta_{g_2},\Delta_{g_3},\Delta_{g_4})\nonumber\\
    &=&\sum_{\eta\in S_4}{\rm sgn}(\eta)\cdot\tau\left(\Delta_{g_0}\star\delta_{\eta(1)}\Delta_{g_1}\star\delta_{\eta(2)}\Delta_{g_2}\star\delta_{\eta(3)}\Delta_{g_3}\star\delta_{\eta(4)}\Delta_{g_4}\right)\nonumber\\
    &=& \sum_{\eta\in S_4}{\rm sgn}(\eta)g_1^{\eta(1)}g_2^{\eta(2)}g_3^{\eta(3)}g_4^{\eta(5)}\cdot\tau\left(\Delta_{g_0}\star\Delta_{g_1}\star\Delta_{g_2}\star\Delta_{g_3}\star\Delta_{g_4}\right)\nonumber\\
    &=& {\rm det}[g_1,g_2,g_3,g_4]\cdot \tau\left(\Delta_{g_0}\star\Delta_{g_1}\star\Delta_{g_2}\star\Delta_{g_3}\star\Delta_{g_4}\right),\nonumber
\end{eqnarray}
so that $c_{\rm top}=\mu_\Theta^{\rm cyc}({\rm vol})$. A similar calculation shows that $c_I=\mu_\Theta^{\rm cyc}(dx_I), |I|=2$, for the cyclic 2-cocycles, and that $\tau$ corresponds to the constant function (0-form) 1.

Using the twisted higher index formula in Section 2 of \cite{Marcolli} and Eq.\ 1.5 of \cite{MQ}, we have
\beq
(\mu_\Theta^{\rm cyc}(dx^I))(\mu_\Theta[E]) = \int_{\TT^4} dx^I \wedge e^{\bf B}\wedge {\rm Ch}(E)\nonumber
\eeq
where $E$ is any vector bundle over $\TT^4$ and ${\rm Ch}(E)$ its Chern character. For example,
\begin{align}
c_{\rm top}(\mu_\Theta{[\bf 1]})&=(\mu_\Theta^{\rm cyc}({\rm vol}))(\mu_\Theta{[\bf 1]})=\int_{\TT^4}{\rm vol}\wedge e^{\bf B}=1,\nonumber\\
c_I(\mu_\Theta[\bf 1])&=(\mu_\Theta^{\rm cyc}(dx^I))(\mu_\Theta{[\bf 1]})=\int_{\TT^4}dx^I\wedge e^{\bf B}={\rm Pf}\,\Theta_{I^c},\nonumber\\
\tau(\mu_\Theta[\bf 1])&=(\mu_\Theta^{\rm cyc}(1))(\mu_\Theta{[\bf 1]})=\int_{\TT^4}e^{\bf B} = {\rm Pf}\,\Theta,\label{twistedindex1}
\end{align}
showing in particular that $[\cP]=\mu_\Theta[\bf 1]$ has $c_{\rm top}([\cP])=1$ as required. Similarly,
\begin{align}
c_{\rm top}(\mu_\Theta{[\wt{\cL_J}]})&=\int_{\TT^4}{\rm vol}\wedge e^{\bf B}\wedge dx^J=0,\nonumber\\
c_I(\mu_\Theta[\wt{\cL_J}])&=\int_{\TT^4}dx^I\wedge e^{\bf B}\wedge dx^J=\begin{cases} 1, & J=I^c \\ 0, & J\neq I^c,\end{cases}\nonumber\\
\tau(\mu_\Theta[\wt{\cL_J}])&=\int_{\TT^4}e^{\bf B}\wedge dx^J = {\rm Pf}\,\Theta_{J^c},\label{twistedindex2}
\end{align}
and with ${\rm Ch}[\wt{\cE}]={\rm vol}$,
\begin{align}
c_{\rm top}(\mu_\Theta[\wt{\cE}])&=\int_{\TT^4}{\rm vol}\wedge e^{\bf B}\wedge {\rm vol}=0, \nonumber\\
c_I(\mu_\Theta[\wt{\cE}])&=\int_{\TT^4}dx^I\wedge e^{\bf B}\wedge {\rm vol}=0, \nonumber\\
\tau(\mu_\Theta[\wt{\cE}])&=\int_{\TT^4}e^{\bf B}\wedge {\rm vol} = 1. \label{twistedindex3}
\end{align}

The same analysis for the restricted Connes map $\mu_{\Theta|}^{\rm cyc}:H^{\rm even}(\TT^3)\rightarrow HP^1(A_{\Theta|})$ gives $c^{\rm odd}_{\rm top}=\mu_{\Theta|}^{\rm cyc}({\rm vol}_{\TT^3})$ and $c^{\rm odd}_i=\mu_{\Theta|}^{\rm cyc}(dx^i), i=1,2,3$. Analogously to \eqref{twistedindex1}--\eqref{twistedindex3}, we have, for the restricted twisted Baum--Connes isomorphism $\mu_{\Theta|}:K^0(\TT^3)\rightarrow K_1(A_{\Theta|})$ and the restricted two-form ${\bf B|}$,
\begin{align}
c^{\rm odd}_{\rm top}(\mu_{\Theta|}[{\bf 1]})&=\mu_{\Theta|}^{\rm cyc}({\rm vol}_{\TT^3})(\mu_{\Theta|}[{\bf 1}])=\int_{\TT^3}{\rm vol}_{\TT^3}\wedge e^{\bf B|}=1,\nonumber\\
c^{\rm odd}_i(\mu_{\Theta|}[{\bf 1}])&=\mu_{\Theta|}^{\rm cyc}(\mu_{\Theta|}[{\bf 1}])=\int_{\TT^3}dx^i\wedge e^{\bf B|}={\rm Pf}\,\Theta|_{\{i\}^c},\label{twistedindex4}
\end{align}
so we can take $[\cU]=\mu_{\Theta|}[{\bf 1}]$. Similarly,
\begin{align}
c_{\rm top}^{\rm odd}(\mu_{\Theta|}{[\wt{\cL_{J'}}]})&=\int_{\TT^3}{\rm vol}_{\TT^3}\wedge e^{\bf B|}\wedge dx^{J'}=0,\nonumber\\
c^{\rm top}_i(\mu_{\Theta|}[\wt{\cL_{J'}}])&=\int_{\TT^3}dx^i\wedge e^{\bf B}\wedge dx^{J'}=\begin{cases} 1, & J'=\{i\}^c \\ 0, & J'\neq \{i\}^c.\end{cases}\label{twistedindex5}
\end{align}

The above computations show that the cyclic cohomology pairings with the $K$-theories of $A_\Theta$ and $A_{\Theta|}$ can be computed on the T-dual side in terms of integrals over the torus, which are straightforward to compute. Of particular interest are the following identities:
\begin{align}
c_{\rm top}([P])&=c_{\rm top}^{\rm odd}(\partial[P])\nonumber\\
c_{\{i,4\}}([P])&=c^{\rm odd}_i(\partial[P]),\qquad i=1,2,3,\,\, [P]\in K_0(A_\Theta).\label{equalityofpairings}
\end{align}
The reason for their relevance to the bulk-boundary correspondence is that $c_{\rm top}^{\rm odd}$ is the (dual PV) boundary cocycle to $c_{\rm top}$, $c^{\rm odd}_i$ is the boundary of $c_{\{i,4\}}$. Thus \eqref{equalityofpairings} expresses the equality of pairings under the PV boundary maps (c.f.\ \cite{ProdanSB}), generalising the correspondence proved for the 2D quantum Hall effect in \cite{Kellendonk1}

Working on the T-dual side, we first write $[P]=\mu_\Theta[E]$ for some (virtual) bundle $E$ over $\TT^4$. By our Theorem \ref{thm:tdualitytrivial}, $\partial(\mu_\Theta[E])=\mu_{\Theta|}(\iota^*[E])$ for any $[E]\in K^0(\TT^4)$, where $\iota$ is the inclusion $\TT^3\rightarrow\TT^4$. Note that $\iota^*[{\bf 1}]=[{\bf 1}], \iota^*[{\wt{\cE}}]=0$, and $\iota^*[\wt{\cL_I}]=[\wt{\cL_I}]$ if $4\notin I$ and is zero otherwise. It suffices to check \eqref{equalityofpairings} for generators $[E]=[{\bf 1}], [\wt{\cL_J}], [\wt{\cE}]$; the equalities in these cases follow from \eqref{twistedindex1}--\eqref{twistedindex5} above.



\begin{thebibliography}{10}

\bibitem{Avila}
Avila, J.C., Schulz-Baldes, H., Villegas-Blas, C.: Topological invariants of edge states for periodic two-dimensional models. Math. Phys. Anal. Geom. {\bf 16}(2) 137--170 (2013)

\bibitem{Atiyah}
Atiyah, M.F.\, $K$-theory.: Benjamin, New York (1964)

%
%

\bibitem{Bellissard}
Bellissard, J., van Elst, A., Schulz-Baldes, H.: The noncommutative geometry of the quantum Hall effect. J. Math. Phys. {\bf 35}(10) 5373--5451 (1994)


\bibitem{Blackadar}
Blackadar, B.: $K$-theory for operator algebras. Math. Sci. Res. Inst. Publ., vol. 5., Cambridge Univ. Press, Cambridge (1998)

\bibitem{Bourne}
Bourne, C., Carey, A.L., Rennie, A.: The bulk-edge correspondence for the Quantum Hall effect in Kasparov theory. Lett. Math. Phys. {\bf 105}(9) 1253--1273 (2015)

\bibitem{BEM}
Bouwknegt, P.,  Evslin, J., Mathai, V.: T-duality: Topology Change from H-flux. Commun. Math. Phys {\bf 249}(2) 383--415 (2004)
\href{http://arxiv.org/abs/hep-th/0306062}{[{\tt arXiv:hep-th/0306062}]}.

\bibitem{BEM2}
Bouwknegt, P.,  Evslin, J., Mathai, V.: On the Topology and Flux of T-Dual Manifolds. Phys. Rev. Lett. {\bf 92}(18) 181601 (2004)
\href{http://arxiv.org/abs/hep-th/0312052}{[{\tt arXiv:hep-th/0312052}]}.



\bibitem{CZ}
Chang C.-Z., et al., Experimental observation of the quantum anomalous Hall effect in a magnetic topological insulator. Science {\bf 340}(6129) 167--170 (2013)

\bibitem{Connes81}
Connes, A.: An analogue of the Thom isomorphism for crossed products of a $C^*$-algebra by an action of $\mathbb{R}$. Adv. Math. {\bf 39}(1) 31--55 (1981)

\bibitem{Connes85}
Connes, A.: Non-commutative differential geometry. Publ. Math. Inst. Hautes \'{E}tude Sci. {\bf 62}(1) 41--144 (1985)

\bibitem{Connes94}
Connes, A.: Noncommutative Geometry. Acad. Press, San Diego (1994)

\bibitem{deNittis}
de Nittis, G., Gomi, K.: Classification of ``Quaternionic'' Bloch-bundles. Commun. Math. Phys. {\bf 339}(1) 1--55 (2015)


\bibitem{Elbau}
Elbau, P. Graf, G.M.: Equality of bulk and edge Hall conductance revisited. Commun. Math. Phys. {\bf 229}(3) 415--432(2002)

\bibitem{Elliott}
Elliott, G.A.: On the $K$-theory of the $C^*$-algebra generated by a projective representation of a torsion-free discrete abelian group. In: Arsene, G.\ et.\ al. (eds.) Operator algebras and group representations I (Neptun, Romania 1980). In: Monographs Stud. Math. {\bf 17} pp.\ 157--184, Pitman, Boston (1984)

\bibitem{FM}
Freed, D.S., Moore, G. W.: Twisted equivariant matter.
Ann. Henri Poincar{\'e} {\bf 14}(8) 1927--2023 (2013) 


\bibitem{FKM}
Fu, L., Kane, C.L., Mele, E.J.: Topological insulators in three dimensions.
Phys. Rev. Lett. {\bf 98}(10) 106803 (2007)


\bibitem{Graf}
Graf, G.M., Porta, M.: Bulk-edge correspondence for two-dimensional topological insulators. Commun. Math. Phys. {\bf 324}(3) 851--895 (2013)



\bibitem{Hal}
Haldane, F.D.M.: Model for a quantum Hall effect without Landau levels: Condensed-matter realization of the parity anomaly. Phys. Rev. Lett. {\bf 61}(18) 2015 (1988)

\bibitem{HM09}
Hannabuss, K.C., Mathai, V.:
Noncommutative principal torus bundles via
parametrised strict deformation quantization.
{Proc. Sympos. Pure Math.} {\bf 81} 133--148 (2010) 
[{\tt arXiv:0911.1886}]

\bibitem{HM10}
Hannabuss, K.C., Mathai, V.:
Parametrised strict deformation quantization of
$C^*$-bundles and Hilbert $C^*$-modules, 
J. Aust. Math. Soc. {\bf 90}(1) 25--38 (2011) 
[{\tt arXiv:1007.4696}]

\bibitem{HMT}
Hannabuss, K., Mathai, V., Thiang, G.C.: T-duality simplifies bulk-boundary correspondence: the parametrised case
\href{http://arxiv.org/abs/1510.04785}{[\tt arXiv:1510.04785]}

\bibitem{HMT2}
Hannabuss, K., Mathai, V., Thiang, G.C.: T-duality simplifies bulk-boundary correspondence: the general case
\href{http://arxiv.org/abs/1603.00116}{[\tt arXiv:1603.00116]}

\bibitem{Hatcher}
Hatcher, A.: Algebraic topology. Cambridge Univ. Press, Cambridge (2002)

\bibitem{Hatsugai}
Hatsugai, Y.: Chern number and edge states in the integer quantum Hall effect. Phys. Rev. Lett. {\bf 71}(22) 3697 (1993)

\bibitem{H} 
Hori, K.: D-branes, T-duality, and index theory. Adv. Theor. Math. Phys. {\bf 3} 281--342 (1999)

\bibitem{Hsieh}
Hsieh, D., Qian, D., Wray, L., Xia, Y., Hor, Y. S., Cava, R.J., Hasan, M.Z.: A topological Dirac insulator in a quantum spin Hall phase. Nature  {\bf 452}(7190) 970--974 (2008)

\bibitem{JM}
Jotzu, G. Messer, M., Desbuquois, R., Lebrat, M., Uehlinger, T., Greif, D., Esslinger, T.: Experimental realization of the topological Haldane model with ultracold fermions.
Nature {\bf 515}(7526) 237--240 (2014)



\bibitem{Kellendonk1}
Kellendonk, J., Richter, T., Schulz-Baldes, H.: Edge current channels and Chern numbers in the integer quantum Hall effect. Rev. Math. Phys. {\bf 14}(1) 87--119 (2002)

\bibitem{Kellendonk2}
Kellendonk, J., Schulz-Baldes, H.: Quantization of edge currents for continuous magnetic operators. J. Funct. Anal. {\bf 209}(2) 388--413 (2004)

\bibitem{Kellendonk3}
Kellendonk, J., Schulz-Baldes, H.: Boundary Maps for $C^*$-Crossed Products with with an Application to the Quantum Hall Effect. Commun. Math. Phys. {\bf 249}(3) 611--637 (2004)


\bibitem{KW}
K\"{o}nig, M., Wiedmann, S., Br\"{u}ne, C., Roth, A., Buhmann, H., Molenkamp, L.W., Qi, X.-L., Zhang, S.-C.: Quantum spin Hall insulator state in HgTe quantum wells. Science {\bf 318}(5851) 766--770 (2007)

\bibitem{Kotani}
Kotani, M., Schulz-Baldes, H., Villegas-Blas, C.: Quantization of interface currents. J. Math. Phys. {\bf 55}(12) 121901 (2014)

\bibitem{KRZ}
Kraus, E.; Ringel, Z. and Zilberberg, O.:
Four-Dimensional Quantum Hall Effect in a Two-Dimensional Quasicrystal.
Phys. Rev. Lett. {\bf 111}(22) 226401 (2013)

\bibitem{LKK0}
Li, D., Kaufmann, R.M., Wehefritz-Kaufmann, B.: Notes on topological insulators. arXiv:1501.02874

\bibitem{LKK}
Li, D., Kaufmann, R.M., Wehefritz-Kaufmann, B.: Topological insulators and $K$-theory. arXiv:1510.08001



\bibitem{Marcolli}
Marcolli, M., Mathai, V.: Twisted index theory on good orbifolds. II. Fractional quantum numbers. Commun. Math. Phys. {\bf 217}(1) 55--87 (2001)
\href{http://arxiv.org/abs/math/9911103}{[{\tt arXiv:math/9911103}]}

\bibitem{Mathai99}
Mathai, V.: $K$-theory of twisted group $C^*$-algebras and positive scalar curvature. 
Contemp. Math. {\bf 231} 203--225 (1999)

\bibitem{MQ}
Mathai, V., Quillen, D.: Superconnections, Thom classes, and equivariant differential forms. Topology {\bf 25}(1) 85--110 (1986)

\bibitem{MR04}
Mathai, V., Rosenberg, J.:
On mysteriously missing T-duals, H-flux and the T-duality group. In: Ge, M.-L., Zhang W. (eds.) Differential Geometry and Physics. In: Nankai Tracts Math., vol. 10, pp.\ 350--358, World Sci. Publ., Hackensack (2006) [{\tt arXiv:hep-th/0409073}]

\bibitem{MR05}
Mathai, V., Rosenberg, J.: T-duality for torus bundles with H-fluxes via noncommutative topology. Commun. Math. Phys. {\bf 253}(3) 705--721 (2005)
\href{http://arxiv.org/abs/hep-th/0401168}{[{\tt arXiv:hep-th/0401168}]}.

\bibitem{MR06}
Mathai, V., Rosenberg, J.: T-duality for torus bundles with H-fluxes via noncommutative topology, II;
the high-dimensional case and the T-duality group. Adv. Theor. Math. Phys. {\bf 10}(1) 123--158 (2006)  
\href{http://arxiv.org/abs/hep-th/0508084}{[{\tt arXiv:hep-th/0508084}]}.

\bibitem{MT}
Mathai, V., Thiang, G.C.: T-duality of topological insulators. J. Phys. A: Math. Theor. (Fast Track Communications) {\bf 48}(42) 42FT02 (2015) \href{http://arxiv.org/abs/1503.01206}{[\tt arXiv:1503.01206]}

\bibitem{MT2}
Mathai, V., Thiang, G.C.: T-duality simplifies bulk-boundary correspondence. Commun. Math. Phys. \href{http://dx.doi.org/10.1007/s00220-016-2619-6}{(Published online)}
\href{http://arxiv.org/abs/1505.05250}{[\tt arXiv:1505.05250]}

\bibitem{MT3}
Hannabuss, K., Mathai, V., Thiang, G.C.: T-duality simplifies bulk-boundary correspondence: the general case. \href{http://arxiv.org/abs/1603.00116}{[\tt arXiv:1603.00116]}

\bibitem{Nest}
Nest, R.: Cyclic cohomology of crossed products with $\mathbb{Z}$. J. Funct. Anal. {\bf 80}(2) 235--283 (1988)

\bibitem{Nest2}
Nest, R.: Cyclic cohomology of non-commutative tori. Can. J. Math. {\bf 40}(5) 1046--1057 (1988)


\bibitem{PR} 
Packer, J., Raeburn, I.: Twisted crossed products of $C^*$-algebras. Math. Proc. Cambridge Philos. Soc. {\bf 106}(2) 293--311 (1989)

\bibitem{PR2} 
Packer, J., Raeburn, I.: Twisted crossed products of $C^*$-algebras. II. Math. Ann. {\bf 287}(1) 595-612 (1990)

\bibitem{Pimsner}
Pimsner, M., Voiculescu, D.: Exact sequences for $K$-groups and $EXT$-groups of certain cross-product $C^*$-algebras. J. Operator Theory {\bf 4} 93--118 (1980)

\bibitem{Price}
Price, H.M., Zilberberg, O., Ozawa, T., Carusotto, I., Goldman, N. Four-Dimensional Quantum Hall Effect with Ultracold Atoms. Phys. Rev. Lett. {\bf 115} 195303 (2015)

\bibitem{Prodan}
Prodan, E.: Virtual Topological Insulators with Real Quantized Physics. Phys. Rev. B {\bf 91} 245104 (2015)

\bibitem{Prodan3}
Prodan, E., Leung, B., Bellissard, J.: The non-commutative $n$th-Chern number ($n\geq 1$). J. Phys. A {\bf 46}(48) 485202 (2013)

\bibitem{ProdanSB}
Prodan, E., Schulz-Baldes, H.: Bulk and Boundary Invariants for Complex Topological Insulators: From $K$-Theory to Physics. Math. Phys. Stud., Springer, Switzerland (2016)


\bibitem{Raeburn}
Raeburn, I, Szyma\'{n}ski, W.: Cuntz--Krieger algebras of infinite graphs and matrices. Trans. Amer. Math. Soc. {\bf 356}(1) 39--59 (2004)

\bibitem{Rieffel3}
Rieffel, M.A.: Strong Morita equivalence of certain transformation group $C^*$-algebras. Math. Ann. {\bf 222}(1) 7--22 (1976)

\bibitem{Rieffel}
Rieffel, M.A.: $C^*$-algebras associated with irrational rotations. Pacific J. Math. {\bf 93}(2) 415--429 (1981)

\bibitem{Rieffel4}
Rieffel, M.A. Non-commutative tori --- a case study of non-commutative differentiable manifolds. Contemp. Math. {\bf 105} 191--211 (1990)

\bibitem{Rieffel1}
Rieffel, M.A.: 
Deformation quantization for actions of ${\bf R}^d\,$, 
Mem. Amer. Math. Soc. {\bf 506} Providence, RI (1993)

\bibitem{Rieffel2}
Rieffel, M.A.: Quantization and $C^*$-algebras, Contemp. Math. {\bf 167} 67--97 (1994) 

\bibitem{Rosenberg}
Rosenberg, J.: Real Baum--Connes assembly and T-duality for torus orientifolds, J. Geom. Phys. {\bf 89} 24--31 (2015)
%


\bibitem{Shubin}
Shubin, M.A. Discrete magnetic Laplacian. Commun. Math. Phys. {\bf 164}(2) 259--275 (1994)

\bibitem{Sudo}
Sudo, T.: K-theory of continuous fields of quantum tori. Nihonkai Math. J. {\bf 15}(2) 141--152 (2004)

\bibitem{Sunada}
Sunada, T. A discrete analogue of periodic magnetic Schr\:{o}dinger operators. Contemp. Math. {\bf 173} 283--299 (1994)

\bibitem{T}
Thiang, G.C.: On the $K$-theoretic classification of topological phases of matter. Ann. Henri Poincar{\'e} {\bf 17}(4) 757--794 (2016)

\bibitem{T2}
Thiang, G.C.: Topological phases: homotopy, isomorphism and $K$-theory. Int. J. Geom. Methods Mod. Phys. {\bf 12}(9) 150098 (2015) \href{http://arxiv.org/abs/1505.05250}{[\tt arXiv:1412.4191]}

\bibitem{Zhang}
Zhang, S.-C., Hu, J.: A four-dimensional generalization of the quantum Hall effect. Science {\bf 294}(5543) 823--828 (2001)



\end{thebibliography}
\end{document}